\documentclass[12pt]{article}

\usepackage[margin=1in]{geometry}

\usepackage{setspace}
\usepackage[table]{xcolor}

\usepackage[
    backend=biber,
    style=apa,
    natbib,
    backref=true,
  ]{biblatex}
\usepackage{hyperref}
\hypersetup{
    colorlinks=true,
    linkcolor=[rgb]{0,0,.7},
    urlcolor=[rgb]{.7,0,.7},,
    citecolor=[rgb]{0,.7,0},
}

\usepackage{amssymb}  

\usepackage{amsmath,amsthm,mathtools}

\usepackage{thmtools}

\declaretheorem[numberwithin=section]{theorem}
\declaretheorem[sibling=theorem, style=definition]{definition}
\declaretheorem[sibling=theorem]{lemma}

\declaretheorem[sibling=theorem]{corollary}

\declaretheorem[sibling=theorem]{proposition}

\makeatletter
\patchcmd{\ALG@step}{\addtocounter{ALG@line}{1}}{\refstepcounter{ALG@line}}{}{}
\newcommand{\ALG@lineautorefname}{Line}
\makeatother

\usepackage{makecell}
\usepackage{booktabs}
\usepackage{multirow}

\usepackage{csquotes}  
\usepackage{colortbl}  

\usepackage[multiple]{footmisc}

\usepackage{float}
\usepackage{comment}
\usepackage{soul}
\usepackage{wrapfig}

\usepackage{graphicx}
\usepackage{dsfont}
\usepackage{tikz}
\usetikzlibrary{calc}
\usetikzlibrary{shapes.geometric}
\usepackage{paralist}

\usepackage{wrapfig}

\usepackage{float}
\usepackage{comment}

\usepackage{caption}
\usepackage[labelformat=simple]{subcaption}

\usepackage{rotating}

\usepackage{comment}

\usepackage{tikz}
\usetikzlibrary{automata, positioning, calc, fit, shapes.misc}
\usetikzlibrary{matrix}
\usetikzlibrary{positioning}
\usetikzlibrary{arrows, decorations.pathmorphing}
\usetikzlibrary{decorations.pathreplacing,calligraphy}

\newcommand{\Z}{\mathbb{Z}}

\makeatletter
\newcommand\Ps@textstyle[2]{\mathbb{P}_{#1}\left[{#2}\right]}
\newcommand\Es@textstyle[2]{\mathbb{E}_{#1}\left[{#2}\right]}
\newcommand\Ps[2]{%
  \mathchoice 
  {\underset{{#1}}{\mathbb{P}}\left[{#2}\right]}
  {\Ps@textstyle{#1}{#2}}
  {\Ps@textstyle{#1}{#2}}
  {\Ps@textstyle{#1}{#2}}
}
\newcommand\Es[2]{%
  \mathchoice 
  {\underset{{#1}}{\mathbb{E}}\left[{#2}\right]}
  {\Es@textstyle{#1}{#2}}{\Es@textstyle{#1}{#2}}{\Es@textstyle{#1}{#2}}
}
\makeatother

\makeatletter
\def\tagform@#1{\maketag@@@{\ignorespaces#1\unskip\@@italiccorr}}
\let\orgtheequation\theequation
\def\theequation{(\orgtheequation)}
\makeatother

\newcommand{\Aut}{\mathrm{Aut}}

\newcommand{\ind}{\mathds{1}}
\newcommand{\abs}[1]{\left|#1\right|}		

\newcommand{\set}[1]{\left\{ #1 \right\}}   

\newcommand{\brac}[1]{\left( #1 \right)}    
\newcommand{\sqbrac}[1]{\left[ #1 \right]}  
\newcommand{\mc}{\mathcal}
\newcommand{\ip}[1]{\langle #1 \rangle}  

\newcommand{\Description}[1]{}

\hypersetup{pdftitle={Existence of Fair Resolute Voting Rules}, pdfauthor={Manik Dhar, Kunal Mittal, Clayton Thomas}}

\begin{document}

\title{Existence of Fair Resolute Voting Rules}
\author{
  Manik Dhar\thanks{Massachusetts Institute of Techonology. {E-mail}: \href{mailto:dmanik@mit.edu}{dmanik@mit.edu}.}
  \and 
  Kunal Mittal\thanks{Courant Institute of Mathematical Sciences, New York University. {E-mail}: \href{mailto:kunal.mittal@nyu.edu}{kunal.mittal@nyu.edu}. Research supported by a Simons Investigator Award to Subhash Khot. Part of this research was completed while the author was at Princeton University, supported by NSF Award CCF-2007462 and a Simons Investigator Award to Ran Raz.}
  \and 
  Clayton Thomas\thanks{Rensselaer Polytechnic Institute. {E-mail}: \href{mailto:}{thomas.clay95@gmail.com}.}
}
\date{February 14, 2026}

\begin{titlepage}
\maketitle
\begin{abstract}
  Among two-candidate elections that treat the candidates symmetrically and never result in a tie, 
which voting rules are fair?
A natural requirement is that each voter exerts an equal influence over the outcome, i.e., is equally likely to swing the election one way or the other.
A voter's influence has been formalized in two canonical ways: the Shapley-Shubik (\citeyear{shapley1954method}) index and the Banzhaf (\citeyear{banzhaf1964weighted}) index.
We consider both indices, and ask: Which electorate sizes admit a fair voting rule (under the respective index)?

For an odd number $n$ of voters, simple majority rule is an example of a fair voting rule.
However, when $n$ is even, fair voting rules can be challenging to identify, and a diverse literature has studied this problem under different notions of fairness.
Our main results completely characterize which values of $n$ admit fair voting rules under the two canonical indices we consider.
For the Shapley-Shubik index, a fair voting rule exists for $n>1$ if and only if $n$ is not a power of $2$.
For the Banzhaf index, a fair voting rule exists for all $n$ except $2$, $4$, and $8$.
Along the way, we show how the Shapley-Shubik and Banzhaf indices relate to the winning coalitions of the voting rule, and compare these indices to previously considered notions of fairness.
\end{abstract}
\thispagestyle{empty}
\end{titlepage}

\maketitle

\section{Introduction}

In a two-candidate election with $n$ voters, a voting rule is \emph{resolute} if it never results in a tie. 
The classic Majority voting rule is resolute whenever $n$ is odd, and a Dictatorship (electing the choice of a single distinguished voter) is resolute for any $n$.
Conventional wisdom suggests that Majority is significantly more \emph{fair} than Dictatorship;
for instance, Majority satisfies the celebrated fairness property of \emph{anonymity}, meaning that the election depends only on the set of votes placed, regardless of who placed which vote.
 
While anonymity doubtlessly guarantees a high degree of fairness, it is also highly restrictive.
Indeed, anonymity precludes many voting rules that might reasonably be considered highly fair, such as the one presented in \autoref{fig:maj-of-maj}.
Moreover, May's theorem (\citeyear{May52}) implies that whenever $n$ is even, there are \emph{no} resolute, anonymous voting rules, precluding such fair voting rules  for these electorate sizes.

\begin{figure}[htbp]
  \begin{minipage}{0.45\linewidth}
    \centering
    \includegraphics[width=0.80\linewidth]{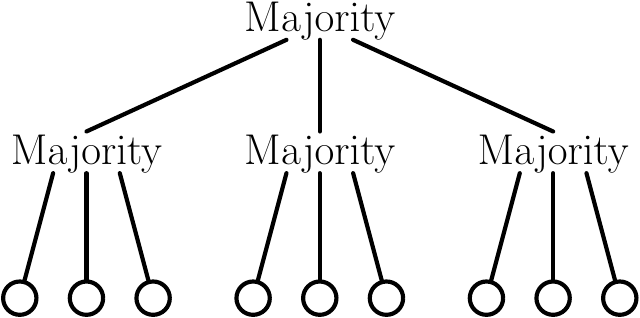}
  \end{minipage}
  \begin{minipage}{0.53\linewidth}
    \caption{A fair, non-anonymous voting rule}
    \label{fig:maj-of-maj}
    \Description{Majority of Majority, described in figure notes.}
    
    { \footnotesize \textbf{Notes:}
    Circles depict $n=9$ voters, who are arranged in $3$ groups of $3$. 
    The voting rule calculates the majority vote within each group, then elects the majority of these three majorities.
    Such voting rules are termed ``representative democracies'' in \citet{BartholdiHJTY21}, who observe that such rules are fair in the sense that every voter plays an identical, symmetric role in the election (see \autoref{sec:prelims-other-stuff}).
    \par }
  \end{minipage}
\end{figure}

In this paper, we consider more permissive notions of fairness, and ask: For which $n$ does there exist a resolute, fair voting rule with $n$ voters?\footnote{
  Throughout the paper, we assume that voting rules are \emph{monotone}, meaning that voting for a candidate can only support electing that candidate, and \emph{neutral}, meaning that the two candidates are treated identically.
}
Our fairness notions require that every voter exerts an equal amount of influence over the outcome. 
This degree of influence has been formalized in two canonical ways: The Shapley-Shubik (\citeyear{shapley1954method}) index and the Banzhaf (\citeyear{banzhaf1964weighted}) index.\footnote{
  The Shapley-Shubik index of voter $i$ is the probability---over all possible orderings of the voters---that $i$ is the first in the order to turn a losing coalition into a winning one.
  The Banzhaf index of voter $i$ is the probability---over all subsets of voters excluding $i$---that adding $i$ to the subset changes it from a losing to a winning coalition.
  See Section~\ref{sec:voting_rules_and_fair_indices}. 
}
These two well-studied indices satisfy distinct properties, and have been compared and contrasted across a number of studies \citep[e.g.,][]{DubeyS79,laruelle2001shapley}.
Nevertheless, the implications of enforcing fairness under these indices remain largely unknown.\footnote{
  A stronger notion of fairness, termed \emph{equitability}, has received comparatively more attention. See \autoref{sec:prelims}.
}

Whenever $n$ is odd, our question has an easy answer: Majority rule is resolute and fair.
For even $n$, resolute and fair voting rules may initially seem counterintuitive.
Indeed, for some small even numbers such as $n=2$, it is easy to show that fair rules do not exist.\footnote{
  For $n=2$, a complete argument goes as follows.
  Suppose voter $1$ votes for candidate $a$, and voter $2$ for $b$.
  Since the voting rule is resolute, some candidate must be elected; suppose without loss of generality it is candidate $a$. 
  Denote this by $f(a,b)=a$.
  Then, monotonicity implies that $f(a,a)=a$, while neutrality implies that $f(b,a)=b$ and $f(b,b)=b$.
  Hence, $f$ is a Dictatorship in which voter $1$ decides the outcome, and this rule is not fair (under any fairness notion we consider).
}
However, it turns out that fair resolute voting rules are possible with even $n$, and many examples of such rules are already known.
For a concrete example, see \autoref{fig:icos}.

\begin{figure}[htbp]
  \begin{minipage}{0.38\linewidth}
    \centering
    \includegraphics[width=0.9\linewidth]{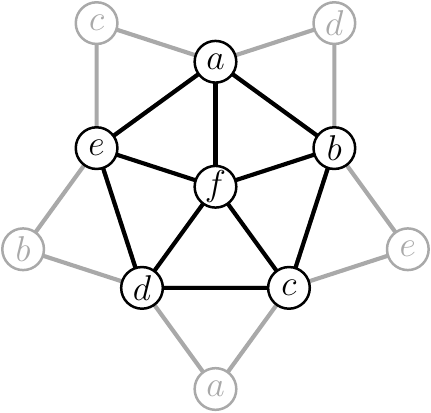}
  \end{minipage}
  \begin{minipage}{0.6\linewidth}
    \caption{A fair rule with an even number of voters}
    \label{fig:icos}
    \Description{Icos, described in figure notes.}
    
    { \footnotesize \textbf{Notes:}
    The $n=6$ voters, labeled $\{a,b,c,d,e,f\}$, are represented by either one or two nodes in a graph.
    Each of the $10$ triangles in this graph represents a triple of voters called a minimal winning coalition (MWC); these MWCs are pairwise intersecting. 
    One can show that for every profile of votes, at least one MWC always votes for the same candidate;
    this candidate is the outcome of the voting rule.
    This voting rule is called ``Icos'' in \citet{LoebC00},
    and one can show that this rule is fair in the same sense as the voting rule in \autoref{fig:maj-of-maj} (see \autoref{sec:prelims-other-stuff}).
    \par }
  \end{minipage}
\end{figure}

In our main results, we exactly characterize the $n$ for which there exist resolute, fair voting rules, under the above fairness notions.
First, for Shapley-Shubik fairness:

\begin{theorem}\label{thm:intro_ss}
There exists a Shapley-Shubik-fair resolute voting rule on $n$ voters if and only if $n\not=2, 4, 8, 16,32 \dots$.
\end{theorem}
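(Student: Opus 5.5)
The proof has two directions: nonexistence when $n=2^k$ with $k\ge 1$, and a construction for every other $n$. The nonexistence direction should be a parity argument on the Shapley--Shubik formula. The constructive direction is the hard part, and below I give only a partial plan for it.

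\textbf{Nonexistence for $n=2^k$, $k\ge 1$.} Fix a monotone, neutral, resolute rule. Neutrality together with resoluteness makes the simple game self-dual: $S$ is winning iff $N\setminus S$ is losing.

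For a voter $i$, let $c_s$ be the number of sets $S\not\ni i$ with $|S|=s$ such that $S$ is losing and $S\cup\{i\}$ is winning. The Shapley--Shubik index can then be written as
\[
\phi_i=\sum_{s} c_s\frac{s!(n-1-s)!}{n!}=\frac{1}{n}\sum_{s=0}^{n-1}\frac{c_s}{\binom{n-1}{s}}.
\]
Fairness forces $\phi_i=1/n$ for all $i$, since the indices sum to $1$. So it requires $\sum_s c_s/\binom{n-1}{s}=1$.

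Self-duality gives a symmetry of the swing counts. Suppose $S$ is a swing for $i$. Put $T=N\setminus(S\cup\{i\})$. Then $T\cup\{i\}=N\setminus S$ is winning because $S$ is losing, and $T=N\setminus(S\cup\{i\})$ is losing because $S\cup\{i\}$ is winning. So $T$ is also a swing for $i$, and $|T|=n-1-|S|$. This map is an involution, so $c_s=c_{n-1-s}$.

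Now use $n-1=2^k-1$, which is odd. This has two consequences:
\begin{itemize}
\item $s\neq n-1-s$ for every $s$, so the terms of the sum pair up exactly.
\item By Lucas' theorem every $\binom{2^k-1}{s}$ is odd.
\end{itemize}
Hence $\sum_s c_s/\binom{n-1}{s}=2\sum_{s<(n-1)/2}c_s/\binom{n-1}{s}$. Setting this equal to $1$ would express $1/2$ as a rational number with odd denominator, which is impossible. (For $n=1$ the dictatorship is trivially fair, which matches the statement.)

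\textbf{Existence for $n$ not a power of $2$.} Write $n=2^a m$ with $m>1$ odd; the case $a=0$ is handled by Majority. A convenient sufficient condition is that $\Aut$ of the rule acts transitively on voters, since then all indices are equal. A direct attempt via composition fails: majority over $m$ blocks of size $2^a$ would need a fair inner rule on $2^a$ voters, which we just ruled out. I would instead proceed in two steps.
\begin{itemize}
\item First build a family for $n=2m$ that generalizes Icos. For example, use a cyclic, $\Z_n$-invariant family of pairwise-intersecting minimal winning coalitions, chosen so that every profile contains a monochromatic one. The aim is a self-dual, transitive rule.
\item Then find a doubling gadget that turns a fair rule on $n$ voters into one on $2n$ voters. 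If transitivity is too much to ask for, it is enough to verify the weaker identity $\sum_s c_s/\binom{n-1}{s}=1$ for each voter, which can be checked by swing counting.
\end{itemize}
This constructive step is the main obstacle. Transitive self-dual constructions for all such $n$ are delicate, so I expect to exploit the extra freedom of Shapley fairness over symmetry, balancing the swing counts $c_s$ directly.
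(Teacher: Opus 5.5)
Your nonexistence argument is correct and is essentially the paper's. The paper expresses $\varphi_i$ through winning-coalition counts and gets the factor $2$ from neutrality, while you get it from the swing involution $S\mapsto N\setminus(S\cup\{i\})$; in both versions the contradiction is that every $\binom{2^k-1}{s}$ is odd.

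The genuine gap is the existence direction for even $n$ that are not powers of $2$. You give no construction, and the concrete route you sketch cannot work. Identify $N$ with $\Z_{2m}$. A rule invariant under translations of $\Z_{2m}$ is never self-dual: $S=\{0,2,\dots,2m-2\}$ satisfies $S+1=N\setminus S$, so invariance forces $f(S)=f(N\setminus S)=1-f(S)$. The same argument rules out any symmetry all of whose cycles have even length, and hence any transitive cyclic symmetry group of even degree. More broadly, asking for transitive rules is the equitability problem, whose characterization the paper notes is open. Isbell's conjecture even predicts nonexistence when the power of $2$ in $n$ is large relative to the odd part. So a transitivity-preserving doubling gadget should not be expected, and you do not specify any other gadget.

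The missing idea is that neither symmetry nor induction is needed; it suffices to control the middle layer. For $n=2m$, declare all sets of size $>m$ winning and all sets of size $<m$ losing. Let the winning $m$-sets form a family $\mc T$ containing exactly one set from each complementary pair; this gives a monotone, neutral, resolute rule. In your own formula the only nonzero swing counts are then $c_{m-1}=c_m=\abs{\mc T_i}$, where $\mc T_i=\{T\in\mc T: i\in T\}$. Hence $\varphi_i=\frac{2\abs{\mc T_i}}{n\binom{2m-1}{m-1}}$, and Shapley--Shubik fairness is equivalent to $\abs{\mc T_i}=\frac14\binom{2m}{m}$ for all $i$. This needs $4\mid\binom{2m}{m}$, which by Lucas holds exactly when $m$ is not a power of $2$, matching your impossibility result.

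The paper builds such a $\mc T$ as follows. Since $\gcd(m-1,2m-1)=1$, every translation orbit of $(m-1)$-subsets of $\Z_{2m-1}$ has size $2m-1$, and the number of orbits is even. Let $\mc S$ be the union of half of these orbits; it is balanced by cyclic invariance. Identify $\Z_{2m-1}$ with $\{1,\dots,2m-1\}$ and set
\[ \mc T=\{S\cup\{2m\}: S\in\mc S\}\cup\bigl\{\{1,\dots,2m-1\}\setminus S:\ \abs{S}=m-1,\ S\notin\mc S\bigr\}. \]
Complementarity holds for any choice of $\mc S$, and a direct count gives $\abs{\mc T_i}=\frac14\binom{2m}{m}$ for every $i$; the resulting rule is even unbiased. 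Here the cyclic group acts on an odd number of points with voter $2m$ fixed, which is exactly what avoids the obstruction above.
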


Second, for Banzhaf fairness, our result is:
\begin{theorem}\label{thm:intro_banzhaf}
   There exists a Banzhaf-fair resolute voting rule on $n$ voters if and only if $n\not=2, 4, 8$.
\end{theorem}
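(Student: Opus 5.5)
The proof splits into an existence direction, handled by explicit constructions, and a nonexistence direction for $n\in\{2,4,8\}$. Throughout I view a resolute, monotone, neutral rule as a monotone self-dual Boolean function $f:\{-1,1\}^n\to\{-1,1\}$, so $f(-x)=-f(x)$. For monotone $f$ the Banzhaf index of voter $i$ is the influence $\mathrm{Inf}_i(f)=\hat f(\{i\})$, and Banzhaf-fairness means all degree-one Fourier coefficients are equal.

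\textbf{Existence for $n\neq 2^k$.} For odd $n$, majority is fair. More generally I would use the rules underlying \autoref{thm:intro_ss}: for every $n>1$ that is not a power of $2$, there is a self-dual monotone rule whose symmetry group is transitive on voters (cf.\ \autoref{fig:maj-of-maj} and \autoref{fig:icos}). Transitivity forces all influences to be equal, so such a rule is Banzhaf-fair.

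\textbf{Existence for $n=2^k$, $k\ge 4$.} Here I would use \emph{block composition}. Let $h$ be a self-dual monotone rule on $r$ blocks and $g_1,\dots,g_r$ fair self-dual rules on disjoint blocks of sizes $n_1,\dots,n_r$, and set $F=h(g_1,\dots,g_r)$. Self-duality makes each $g_j$ output an unbiased bit, so influences multiply:
\[
\mathrm{Inf}_v(F)=\mathrm{Inf}_j(h)\,\mathrm{Inf}(g_j) \qquad \text{for } v \text{ in block } j .
\]
Hence $F$ is fair whenever $\mathrm{Inf}_j(h)\,\mathrm{Inf}(g_j)$ does not depend on $j$. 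This gives two useful facts.
\begin{itemize}
\item If $F_a$ is fair on $a$ voters and $F_b$ is fair on $b$ voters, composing $F_b$ with copies of $F_a$ gives a fair rule on $ab$ voters.
\item Non-symmetric outer rules (weighted majorities) can balance blocks whose rules have different per-voter influence.
\end{itemize}
The plan is to exhibit fair rules on $16$ and $32$ voters, say by an outer weighted majority over blocks of sizes such as $\{5,5,6\}$ or more blocks, with the influences of the block rules matched as dyadic rationals. This may require a computer search over small weighted outer rules and block rules. Products with these, e.g.\ $2^{k}=16\cdot 2^{k-4}$, do not directly work, since no fair rule exists on $2$ voters. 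So I would instead look for a recursive step $2^{k}\to 2^{k+1}$ via an uneven decomposition $2^{k+1}=\sum_j n_j$ into non-powers of two, with an outer rule whose influences compensate. I expect this to be the main obstacle: all influences are dyadic, which rules out naive balancing such as the weighted majority $(2,1,1,1)$ with dictator blocks, since that would require an influence of exactly $1/3$.

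\textbf{Nonexistence for $n=2,4,8$.} First I would record integrality constraints. Pivotal points for $i$ come in quadruples $\{x,x^{\oplus i},-x,-x^{\oplus i}\}$, so
\[
\mathrm{Inf}_i(f)=\frac{m_i}{2^{n-2}}, \qquad m_i\in\mathbb Z .
\]
Moreover every voter lies in exactly $2^{n-2}+m$ winning coalitions when $f$ is fair. For $n=2$ the footnote argument applies. For $n=4$, the only monotone self-dual functions up to relabeling are the dictator and the majority of three voters, and both are unfair. For $n=8$ I would reduce by symmetry and, using the integrality and counting constraints together with $\sum_i \hat f(i)\le \mathbb E|\sum_i x_i|$, enumerate the self-dual monotone functions on $8$ variables up to permutation (a finite, computer-checkable list). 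I would then verify that none has all influences equal.
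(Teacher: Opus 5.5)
\textbf{Existence.} There are two gaps here.

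First, for even $n$ that is not a power of $2$, you rely on a self-dual monotone rule with a \emph{transitive} symmetry group for every such $n$, and attribute this to the rules behind \autoref{thm:intro_ss}. That is exactly the equitability question the paper flags as open (Isbell's problem). Isbell's conjecture in fact predicts that such rules fail to exist for $n=2^ab$ ($b$ odd) with $a$ large relative to $b$, so this step is unsupported. Citing \autoref{thm:intro_ss} as a black box would not help either, since it only asserts Shapley--Shubik-fairness.

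The paper's rules are only shown to be \emph{unbiased}, and that suffices, since $\beta_i=\beta_i^{(1/2)}$. The construction, with $n=2m$, is as follows. Take half of the $\Z/(2m-1)\Z$-orbits of $(m-1)$-subsets; each orbit has size $2m-1$ because $\gcd(m-1,2m-1)=1$, and the number of orbits is even exactly when $m$ is not a power of $2$. Adjoin voter $2m$ to each chosen set, and add every $m$-subset of $\{1,\dots,2m-1\}$ whose complement was not chosen. One then checks that each voter lies in $\frac14\binom{2m}{m}$ of these middle-layer winning sets (\autoref{lemma:design_construction}, \autoref{lemma:comp_balanced_exist}, \autoref{lemma:divisibility}).

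Second, for $n=2^k$ with $k\ge 4$ you give no rule at all. Your composition formula is correct, but you exhibit no instance, and you yourself name this as the main obstacle. With fair rules on $16,32,64,128$ voters, composition would give every larger power of two, but you have none of these base cases. The paper settles this case by invoking the regular maximal intersecting families of \citet{Mey95}. Maximal intersecting families are exactly the winning-coalition families of monotone neutral resolute rules, and regularity makes all the sums $\sum_k w_i^{(k)}$ equal, which gives Banzhaf-fairness by \autoref{prop:winning_coalition_properties}.

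\textbf{Non-existence.} For $n=4$ your classification misses the weighted majority with weights $(2,1,1,1)$ and quota $3$. It is monotone, self-dual and depends on all four voters; it is the rule of \autoref{fig:n-is-4-guy}. It is unfair ($\beta_a=3/4$ versus $1/4$), so your conclusion survives, but the argument as written is wrong.

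For $n=8$, ``enumerate the self-dual monotone functions on $8$ variables'' is a computation you neither perform nor reduce to something tractable. There are roughly $2.3\cdot 10^{11}$ such functions, so as written nothing is proved.

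The idea you are missing, which the paper uses for both $n=4$ and $n=8$, is this. If no coalition of size below $n/2$ wins, then by neutrality every coalition of size above $n/2$ wins, so $w_i^{(k)}$ is independent of $i$ for all $k\neq n/2$. Part 3 of \autoref{prop:winning_coalition_properties} then turns Banzhaf-fairness into equality of the $w_i^{(n/2)}$, i.e.\ unbiasedness. That is impossible, because it would force $\frac14\binom{n}{n/2}\in\mathbb{Z}$, while $\binom{n}{n/2}\equiv 2\pmod 4$ for $n$ a power of $2$.

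For $n=8$ this leaves only rules with a winning coalition of size at most $3$, which the paper handles by hand:
\begin{itemize}
\item Size $1$ gives a dictator.
\item Size $2$ forces $w_1=w_2=32$, and then all of $\{1,3\},\{2,3\},\{1,4\},\{2,4\}$ must win, contradicting \autoref{lemma:wc_intersect}.
\item Size $3$ gives $w_1=w_2=w_3=s$ and $w_{12}=w_{13}=w_{23}=32-s$ from neutrality and fairness. Double counting $\sum_{f(S)=1}|S|$ then yields $t=4s-58$, leaving $s\in\{15,16\}$, and each of these is refuted directly.
\end{itemize}
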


As an intermediate step, we show how the fairness notions relate to the winning coalitions of the voting rule.
Specifically, we give a formula for the Shapley-Shubik index, and Banzhaf index, of voter $i$, in terms of (only) the number of winning coalitions of different sizes that include $i$ (and which does not reference, e.g., whether $i$ is pivotal for certain winning coalitions).
This also allows us to show that a different previously considered fairness notion, termed unbiasedness, is stronger than both Shapley-Shubik-fairness and Banzhaf-fairness.

Our results show that fairness under these indices is considerably more permissive than it might at first appear. 
For Shapley-Shubik fairness, the fraction of electorate sizes $n\le M$ for which no fair voting rule exists goes to zero as $M$ goes to infinity.
For Banzhaf fairness, even more is true: all but a constant number of electorate sizes admit a fair voting rule.
This may open a path for using fair and resolute voting rules in-practice. 
\section{Preliminaries}
\label{sec:prelims}

\subsection{Voting Rules and Fairness Indices}\label{sec:voting_rules_and_fair_indices}

Let $N = \{1,\ldots,n\}$ denote a set of $n$ voters.
A (resolute, two-candidate) \emph{voting rule} is a function $f : 2^N \to \{0,1\}$. 
Here, $0$ and $1$ represent the two candidates in the election.
If $S\subseteq N$ denotes all voters who vote for $1$, then $f(S)$ gives the result of the election.
We restrict attention to voting rules satisfying the following properties:
\begin{enumerate}
    \item (Monotone) Increasing the set of votes for a candidate can only increase their chances of being elected; i.e., $f(S)\leq f(T)$ for every $S\subseteq T\subseteq N$.
    \item (Neutral) The voting rule treats the candidates identically; i.e. $f(N\setminus S)=1-f(S)$ for each $S\subseteq N$.
\end{enumerate}

A set $S\subseteq N$ is called a \emph{winning coalition} of $f$ if $f(S)=1$.
Since $f$ is monotone, any superset of a winning coalition is also a winning coalition.
Since $f$ is neutral, a winning coalition $S$ can always elect either candidate of their choosing; i.e., if $S$ all vote for candidate $0$, the result is $f(N\setminus S)=0$.
Since $f$ is resolute, exactly one of $S$ or $N\setminus S$ is always a winning coalition for each $S\subseteq N$.
Moreover, any two winning coalitions are intersecting; formally:

\begin{lemma}[Winning Coalitions are Intersecting]\label{lemma:wc_intersect}
	Let $f:2^N\to \set{0,1}$ be a voting rule, and let $S,T\subseteq N$ be such that $f(S)=f(T)=1$.
	Then, $S\cap T \not= \emptyset$.
\end{lemma}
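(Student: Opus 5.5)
The plan is a short proof by contradiction that uses only monotonicity and neutrality. Suppose $S\cap T=\emptyset$. Then $T\subseteq N\setminus S$, because every element of $T$ lies outside $S$. Monotonicity gives $f(T)\le f(N\setminus S)$, so $f(N\setminus S)\ge 1$, i.e., $f(N\setminus S)=1$.

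Neutrality, however, says $f(N\setminus S)=1-f(S)=1-1=0$. These two conclusions contradict each other, so $S\cap T\neq\emptyset$.

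There is no real obstacle here. The only step needing care is the choice of which set to compare $T$ against. Monotonicity must be applied to the pair $T\subseteq N\setminus S$, so that the winning coalition $T$ forces its superset $N\setminus S$ to be winning. Neutrality then forces $N\setminus S$ to be losing, since its complement $S$ is winning.
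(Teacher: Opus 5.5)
Your proof is correct and is essentially the paper's argument. The paper uses $S\subseteq N\setminus T$ and writes $1=f(S)\le f(N\setminus T)=1-f(T)=0$, which is your argument with $S$ and $T$ swapped; by that symmetry, it does not matter which set you compare against.
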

\begin{proof}
	Suppose that $S\cap T = \emptyset$.
	Then, $S\subseteq N\setminus T$, and using monotonicity and neutrality, we get
	$1 = f(S) \leq f(N\setminus T) = 1-f(T)=0,$
	which is a contradiction.
\end{proof}

For a concrete example, \autoref{fig:n-is-4-guy} shows an example of a voting defined in terms of its winning coalitions.
In the introduction, \autoref{fig:icos} defined a voting rule via a similar approach;\footnote{
  \autoref{fig:icos} refers to \emph{minimal} winning coalitions (MWCs), which are winning coalitions such that no strict subset is a winning coalition.
}
in \autoref{fig:maj-of-maj}, the winning coalitions are all subsets of voters that include, for at least two of the three subgroups, at least two of the three voters within the subgroup.

\begin{figure}[htbp]
  \begin{minipage}{0.4\linewidth}
    \centering
    \includegraphics[width=0.6\linewidth]{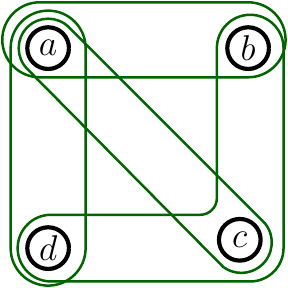}
  \end{minipage}
  \begin{minipage}{0.5\linewidth}
    \caption{A (non-fair) voting rule}
    \label{fig:n-is-4-guy}
    \Description{A non-fair voting rule, described in figure notes.}

    { \footnotesize \textbf{Notes:}
    The voters are labeled $N=\{a,b,c,d\}$.
    Winning coalitions are depicted as curved shapes enclosing two or three voters. 
    $N$ is also a winning coalition.
    In words, the election works as follows.
    If any of $\{b,c,d\}$ place the same vote as $a$, then that candidate is elected. 
    Otherwise, all voters in $\{b,c,d\}$ must place the same vote, and that candidate is elected.
    \par }
  \end{minipage}
\end{figure}

We study two canonical notions from the voting theory literature of the influence a voter exerts on the outcome of the voting rule.
First, the \emph{Shapley-Shubik index} $\varphi_i$ \citep{shapley1953value, shapley1954method} of voter $i$ is defined to be the probability, given a uniformly random ordering of the voters, that the set of voters before and including $i$ is winning but the set of voter strictly before $i$ is not winning.
Formally,
\begin{align*}
  \varphi_i 
  & = \Ps{\sigma \sim S_n}{ 
    f\bigl(\{j\in N: \sigma(j)\leq \sigma(i)\}\bigr)=1\ \land\
     f\bigl(\{j\in N: \sigma(j) < \sigma(i)\}\bigr)=0 }
  \\ & = \sum_{S\subseteq N, S \ni i}
    \frac{ (|S|-1)!\cdot  (n-|S|)!}{n!} \cdot
    \bigl( f(S) - f(S\setminus \set{i}) \bigr).
\end{align*}
Observe that these indices sum to 1, i.e. $\sum_{i\in N}\varphi_i = 1$.

Second, the \emph{Banzhaf index} $\beta_i$ \citep{penrose1946elementary, banzhaf1964weighted} of voter $i$ is defined to be the probability, given a uniformly random set $S$ of voters other than $i$, that when $i$ is added to $S$ the result is a winning coalition, but that $S$ by itself is not a winning coalition.
Formally,
\begin{align*}
  \beta_i 
  & = \Ps{S \sim 2^{N\setminus \set{i}}}{ 
    f(S \cup \set{i})=1\ \land\  f(S)=0 }
  \\ & = \frac{1}{2^{n-1}} 
  \sum_{S\subseteq N, S \ni i}
    \bigl( f(S) - f(S\setminus \set{i}) \bigr).
\end{align*}

For example, for the voting rule defined in \autoref{fig:n-is-4-guy}, we have
\begin{align*}
  \varphi_a=\frac{1}{2}, \qquad \varphi_b=\varphi_c=\varphi_d= \frac{1}{6}, \qquad\qquad\qquad \beta_a=\frac{3}{4}, \qquad \beta_b=\beta_c=\beta_d=\frac{1}{4}.
\end{align*}

We study voting rules that are fair in the sense that every voter has the same influence over the outcome, according to one of the above indices.
Formally, we call a voting rule \emph{Shapley-Shubik-fair} (resp. \emph{Banzhaf-fair}) if every voter has the same Shapley-Shubik index (resp. Banzhaf index).

\subsection{Related Notions}
\label{sec:prelims-other-stuff}

We also study voting rules that satisfy a stronger notion of fairness, termed \emph{unbiasedness}.

To define unbiasedness, let $\mu^{(p)}$ denote the distribution over subsets $S$ of voters that, independently for each voter $i\in N$, includes $i$ in $S$ with probability $p$ and excludes $i$ with probability $1-p$.
In other words, $\mu^{(p)}$ is the distribution over $2^N$ such that $\mu^{(p)}(S)=p^{|S|}(1-p)^{n-|S|}$ for each $S\in 2^N$. 
Then, define
\begin{align*}
  \beta_i^{(p)} 
  & = \Ps{S \sim \mu^{(p)}}{ 
    f(S \cup \set{i})=1\ \land\ f(S\setminus \{i\})=0 }.
\end{align*}
Roughly, this measures the probability that voter $i$ is pivotal, when the distribution over the voters is $p$-biased.\footnote{This coincides with the notion of of $p$-biased influence, a central and well-studied concept in the analysis of Boolean functions; see \cite{Don14} for an excellent introduction to this subject.}
We call $f$ \emph{unbiased} if, for all $p\in [0,1]$, we have $\beta_i^{(p)}=\beta_j^{(p)}$ for every pair of voters $i,j\in N$.

Any unbiased voting rule is also Shapley-Shubik-fair and Banzhaf-fair.
For Banzhaf, this follows immediately from the observation that $\beta_i=\beta_i^{(1/2)}$.
For Shapley-Shubik, we establish this fact by counting winning coalitions.
In fact, we prove the following proposition relating winning coalitions to unbiasedness, Shapley-Shubik-fairness and Banzhaf-fairness.

\begin{proposition}
\label{prop:winning_coalition_properties}
    Let $f:2^N\to \set{0,1}$ be a voting rule.
    For each $k\in \set{0,1,\dots,n}$, $i\in N$, let $w_i^{(k)}$ denote the number of winning coalitions of size $k$ containing the voter $i$, i.e., \[w_i^{(k)} = \abs{\set{S\subseteq N :  f(S)=1,\ S\ni i,\ \abs{S}=k}}.\]
    Then, the following hold:
    \begin{enumerate}
        \item The voting rule $f$ is unbiased if and only if for all voters $i,j\in N$ and all $k\in\{0,1,2,\dots,n\}$,  we have $w_i^{(k)}=w_j^{(k)}$.
        
        \item For each voter $i\in N$, the Shapley-Shubik index satisfies \[ \varphi_i = -1+2\cdot \sum_{k=1}^n \frac{ (k-1)! (n-k)!}{n!} \cdot w_i^{(k)} . \]

        \item For each voter $i\in N$, the Banzhaf index satisfies \[ \beta_i = -1 + \frac{1}{2^{n-2}} \sum_{k=1}^n w_i^{(k)}.\]
    \end{enumerate}
    In particular, if $f$ is unbiased, it is Shapley-Shubik-fair and Banzhaf-fair.
\end{proposition}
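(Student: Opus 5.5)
The whole argument rests on one identity that uses neutrality. Fix a voter $i$ and set $T=S\setminus\{i\}$ for $S\ni i$. Neutrality gives $f(T)=1-f(N\setminus T)$, and $N\setminus T$ is a set containing $i$ of size $n-|S|+1$. Hence
\[
\sum_{S\ni i,\,|S|=k} f(S\setminus\{i\}) = \binom{n-1}{k-1} - w_i^{(n-k+1)} .
\]
The right-hand side counts the sets $S\ni i$ of size $k$ for which $N\setminus(S\setminus\{i\})$ is not winning. I will substitute this into each index formula, grouping the terms by $k=|S|$.

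For part 3, summing over $k$ gives
\[
2^{n-1}\beta_i=\sum_k w_i^{(k)}-\Bigl(2^{n-1}-\sum_k w_i^{(n-k+1)}\Bigr)=2\sum_k w_i^{(k)}-2^{n-1},
\]
where the second sum is reindexed by $k\mapsto n-k+1$. Dividing by $2^{n-1}$ yields the stated formula.

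For part 2, the weight $c_k=\frac{(k-1)!(n-k)!}{n!}$ satisfies $c_k=c_{n-k+1}$, and $\binom{n-1}{k-1}c_k=\frac1n$. So
\[
\varphi_i=\sum_k c_k w_i^{(k)}-\sum_k c_k\binom{n-1}{k-1}+\sum_k c_{n-k+1}w_i^{(n-k+1)}.
\]
After reindexing, this equals $2\sum_k c_k w_i^{(k)}-1$, since the middle sum is $\sum_{k=1}^n\frac1n=1$.

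For part 1, I would apply the same substitution to $\beta_i^{(p)}$. Summing over sets $S\ni i$, the pivotal event has probability $\mu^{(p)}$-weight $(p^{|S|-1}(1-p)^{n-|S|})\cdot(f(S)-f(S\setminus\{i\}))$; the factor $p$ or $1-p$ for voter $i$ sums out to one. Grouping by $k$ and substituting the identity gives
\[
\beta_i^{(p)}=\sum_k w_i^{(k)}\bigl(p^{k-1}(1-p)^{n-k}+p^{n-k}(1-p)^{k-1}\bigr)-1 .
\]
This uses $\sum_k\binom{n-1}{k-1}p^{k-1}(1-p)^{n-k}=1$ and the reindexing $k\mapsto n-k+1$ in the last term.

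The direction ``equal $w^{(k)}$ implies unbiased'' is immediate from this formula. The converse is the main obstacle. Equality of these polynomials in $p$ on $[0,1]$ only gives equality of the symmetrized quantities $w_i^{(k)}+w_i^{(n-k+1)}$, because the basis polynomial for $k$ coincides with the one for $n-k+1$. The basis $p^{k-1}(1-p)^{n-k}$ for $k=1,\dots,n$ is linearly independent, so after combining paired terms, equality for all $p$ yields $w_i^{(k)}+w_i^{(n-k+1)}=w_j^{(k)}+w_j^{(n-k+1)}$ for every $k$.

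To separate $w^{(k)}$ from $w^{(n-k+1)}$ I would use a second relation. Neutrality also gives $w_i^{(k)}+(\text{winning sets of size }n-k\text{ avoiding }i)=\binom{n-1}{k-1}$, by complementing sets containing $i$. Let $W_k$ be the total number of winning sets of size $k$. Then the sets of size $n-k$ avoiding $i$ number $W_{n-k}-w_i^{(n-k)}$, so
\[
w_i^{(k)}=\binom{n-1}{k-1}-W_{n-k}+w_i^{(n-k)} .
\]
Hence the differences $d_k=w_i^{(k)}-w_j^{(k)}$ satisfy $d_k=d_{n-k}$. Combined with $d_k=-d_{n-k+1}$ from the polynomial identity, this gives $d_k=-d_{k+1}$ in a chain. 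Since $d_0=0$ (no set of size $0$ contains $i$), all $d_k$ vanish.

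The final claim follows because parts 2 and 3 express $\varphi_i$ and $\beta_i$ purely through the $w_i^{(k)}$, so by part 1 an unbiased rule has equal indices for all voters.
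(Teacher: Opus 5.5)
Your proof is correct. Parts 2 and 3 follow the paper's computation: neutrality turns $f(S\setminus\{i\})$ into $1-f((N\setminus S)\cup\{i\})$, and grouping by $|S|$ and reindexing $k\mapsto n-k+1$ yields both formulas. Part 1 takes a genuinely different route.

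\textbf{The paper's route.} The paper expands $\beta_i^{(p)}=\sum_k p^k(1-p)^{n-k}\abs{\mathcal A_i^{(k)}}$ with $S$ ranging over all of $N$. Linear independence then gives equality of the pivotal counts $\abs{\mathcal A_i^{(k)}}$ directly. It transfers this to the $w_i^{(k)}$ through two monotonicity-only recurrences, $\abs{\mathcal A_i^{(k)}}=\abs{\mathcal B_i^{(k)}}+\abs{\mathcal B_i^{(k+1)}}$ and $\abs{\mathcal W_i^{(k)}}=\abs{\mathcal B_i^{(k)}}+\abs{\mathcal W_{\lnot i}^{(k-1)}}$, plus induction. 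So the paper never uses neutrality in part 1, and it gets the intermediate equivalences as a by-product.

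\textbf{Your route.} You integrate out voter $i$ and reuse the neutrality identity from parts 2 and 3. This makes the ``if'' direction a one-liner and exhibits part 3 as the $p=1/2$ case of a single formula. The cost is that it folds $w^{(k)}$ together with $w^{(n-k+1)}$. You therefore need the second neutrality relation $d_k=d_{n-k}$ to separate them, and your chain $d_{k+1}=-d_{n-k}=-d_k$ with $d_0=0$ does this correctly.

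\textbf{A shorter variant.} The detour is avoidable. In $\beta_i^{(p)}$, write $\sum_{S\ni i,\,\abs{S}=k}f(S\setminus\{i\})=W_{k-1}-w_i^{(k-1)}$, since these are the winning $(k-1)$-sets avoiding $i$. The coefficient of $p^{k-1}(1-p)^{n-k}$ is then $w_i^{(k)}+w_i^{(k-1)}-W_{k-1}$, which is exactly $\abs{\mathcal B_i^{(k)}}$. Linear independence gives $d_k=-d_{k-1}$ at once, without neutrality. This is essentially the paper's $\mathcal A\to\mathcal B\to\mathcal W$ chain compressed into one step.
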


This result may be of independent interest.
We prove \autoref{prop:winning_coalition_properties} below in Section~\ref{sec:unbiased_implies_others}.

For an additional point of comparison, one can we consider \emph{equitable} voting rules, in the sense of \citet{BartholdiHJTY21}.\footnote{
  The problem of characterizing equitable (and resolute, neutral, and monotone) voting rules has been studied in-depth in group theory.
  A famous open problem known as Isbell's conjecture (\citeyear{Isbell60}) is to characterize which $n$ permit such voting rules with $n$ voters.
  Other relevant works include \citet{CameronFK89} and \citet{LoebC00}.
  We find Shapley-Shubik-fairness and Banzhaff-fairness comparatively easy to characterize, and (in contrast to this open problem for equitability) we completely characterize which $n$ permit such fair rules.
}
In contrast to our main fairness notions, which are based around the degree of influence a voter has, equitability is based around the idea of voters playing symmetric roles.
To define equitability, say that a permutation $\sigma : N \to N$ is a \emph{symmetry} of $f$ if $S\subseteq N$ is a winning coalition of $f$ if and only if $\sigma(S) = \{\sigma(i) \mid i \in S\}$ is a winning coalition of $f$.
Then, $f$ is equitable if the group of symmetries of $f$ is transitive, that is, if for all pairs of voters $i,j$, there is a symmetry $\sigma$ of $f$ such that $\sigma(i)=j$.
If $f$ is equitable, then it is not hard to see that it is unbiased (and hence, additionally Shapley-Shubik-fair and Banzhaf-fair).
Figures~\ref{fig:maj-of-maj} and~\ref{fig:icos} in the introduction provide examples of equitable voting rules.
\cite{Bhatnagar20} constructs examples of voting rules that are unbiased but not equitable; we also provide an example in \autoref{sec:size-9-unbiased}.

\autoref{fig:notion-relation} shows the relationship between all notions of fairness we consider.

\def\middleName{Unbiased}
\def\innerName{Equitable}

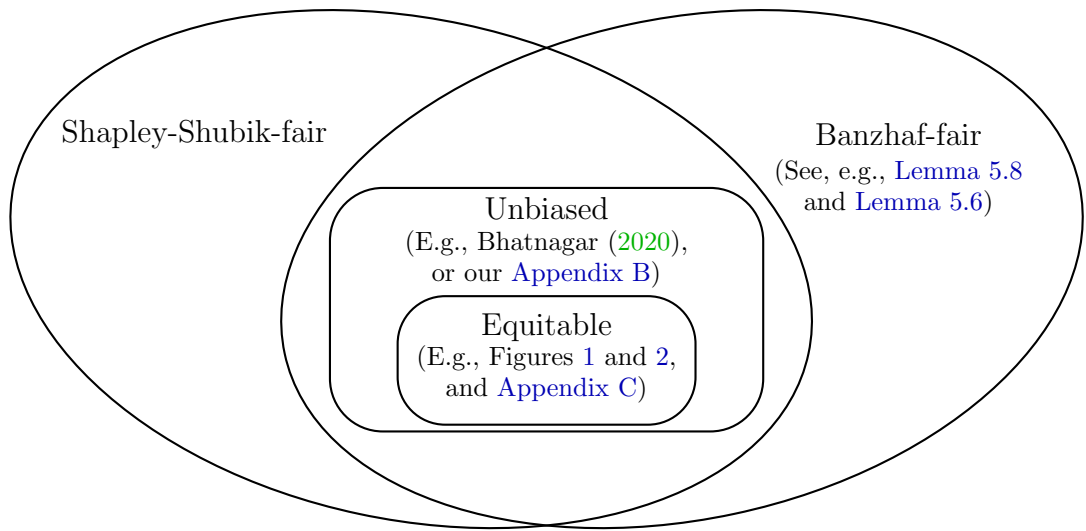
\begin{figure}[htbp]
  \Description{Relation between fairness notions, described in text.}
  {
  \centering
  \footnotesize
  \begin{tikzpicture}[x=1cm,y=1cm,
      every node/.style={font=\footnotesize},
      scale=0.9
  ]
  
  \draw[thick,rotate around={-12:(-2,0.4)}]
      (-2,0.4) ellipse (6 and 3.7);
  \node[anchor=east] at (-3.1,2.4) {\normalsize Shapley-Shubik-fair};
  
  \draw[thick,rotate around={12:(2,0.4)}]
      (2,0.4) ellipse (6 and 3.7);
  \node[anchor=center] at (5.2,2.4) {\normalsize Banzhaf-fair};
  \node[align=center] at (5.2,1.6) {(See, e.g., \autoref{lem:exists-banzhaf-fair} \\and \autoref{lem:shapley-non-existence})};

  \draw[thick,rounded corners=20pt]
      (-3.2,-2.0) rectangle (3.2,1.6);
  \node at (0,1.3) {\normalsize Unbiased};
  \node[align=center] at (0,0.55) {(E.g., \citet{Bhatnagar20},\\or our \autoref{sec:size-9-unbiased})};
  
  \draw[thick,rounded corners=18pt]
      (-2.2,-1.9) rectangle (2.2,0.0);
  \node at (0,-0.45) {\normalsize Equitable};
  \node[align=center] at (0,-1.15) {(E.g., Figures~\ref{fig:maj-of-maj} and~\ref{fig:icos},\\and \autoref{app:equit_ex})};
  
  \end{tikzpicture}
  }
  
  \caption{Relation between fairness notions}
  \label{fig:notion-relation}
\end{figure}

\section{Unbiased implies Shapley-Shubik-Fair and Banzhaf-Fair}\label{sec:unbiased_implies_others}

In this section, we prove each of the items in Proposition~\ref{prop:winning_coalition_properties}.
Let $f:2^N\to \set{0,1}$ be a voting rule.
For each voter $i\in N$, and $k\in \set{0,1,\dots,n}$, let $w_i^{(k)}$ denote the number of winning coalitions of size $k$ containing voter $i$, i.e., \[w_i^{(k)} = \abs{\set{S\subseteq N :  f(S)=1,\ S\ni i,\ \abs{S}=k}}.\]

First, we characterize unbiasedness in terms of the above winning coalition counts, as follows:

\begin{proposition}
\label{prop:unbiased_win_coal}
    Let $f:2^N\to \set{0,1}$ be a voting rule.
    For any voter $i\in N$, and integer $k$, define:
    \begin{itemize}
        \item The coalitions of size $k$ pivotal for voter $i$ by \[\mc A_i^{(k)} = \set{S\subseteq N: \abs{S}=k,\ f(S\cup\set{i})=1,\ f(S\setminus\set{i})=0}.\]
        \item The winning coalitions of size $k$ pivotal for voter $i$ by \[\mc B_{i}^{(k)} = \set{S\subseteq N \ : \ \abs{S}=k,\ S\ni i,\ f(S)=1,\ f(S\setminus\set{i}) = 0}.\]
        \item The winning coalitions of size $k$ containing voter $i$ by \[\mc W_{i}^{(k)} = \set{S\subseteq N \ : \ \abs{S}=k,\ S\ni i,\ f(S)=1}.\]
    \end{itemize}

	Then, the following are equivalent:
	\begin{enumerate}
		\item The voting rule $f$ is unbiased.
		\item For every $k\in \set{0,1,\dots,n}$ and voters $i,j\in N$, it holds that $|\mc A_i^{(k)}| = |\mc A_j^{(k)}|$.
		\item For every $k\in \set{0,1,\dots,n}$ and voters $i,j\in N$, it holds that $|\mc B_i^{(k)}| = |\mc B_j^{(k)}|$.
		\item For every $k\in \set{0,1,\dots,n}$ and voters $i,j\in N$, it holds that $|\mc W_i^{(k)}| = |\mc W_j^{(k)}|$.
	\end{enumerate}
\end{proposition}
\begin{proof}
	Let $f:2^N \to \set{0,1}$ be a voting rule.
	The equivalences are proved as follows.

    \paragraph{\boldmath $(1\iff 2)$:} For any voter $i\in N$ and $p\in [0,1]$, we can write
		\[ \beta_{i}^{(p)} = \Ps{S \sim \mu^{(p)}}{ 
    f(S \cup \set{i})=1\ \land\ f(S\setminus \{i\})=0 } =  \sum_{k=0}^n p^k (1-p)^{n-k}\cdot \abs{\mc A_{i,k}}.\]
		Observe that the polynomials $\set{p^k (1-p)^{n-k}:k=0,1,\dots, n}$ are linearly independent (over the reals).
		Hence, for any $i,j\in N$, it holds that $\beta_{i}^{(p)} = \beta_{j}^{(p)}$ for every $p\in [0,1]$ if any only if $|\mc A_i^{(k)}| = |\mc A_j^{(k)}|$ for every $k\in \set{0,1,\dots,n}$.

    \paragraph{\boldmath $(2\iff 3)$:} For any voter $i\in N$, and  $k\in \set{0,1,\dots,n}$, we have
    \[ |\mc A_{i}^{(k)}| = |\{S\in \mc A_{i}^{(k)}: S\ni i\}| + |\{S\in \mc A_{i}^{(k)}: S\not \ni i\}|
			= |\mc B_i^{(k)}| + |\mc B_i^{(k+1)}|.\]
   
    Now, suppose voters $i,j\in N$ are such that $|\mc B_i^{(k)}| = |\mc B_j^{(k)}|$ for every $k=0,1,\dots,n$.
    Then, by the above (along with the fact $|\mc B_i^{(n+1)}| = 0$ for every $i\in N$), it holds that $|\mc A_i^{(k)}| = |\mc A_j^{(k)}|$ for every $k=0,1,\dots,n$.
    
	Conversely, suppose voters $i,j\in N$ are such that $|\mc A_i^{(k)}| = |\mc A_j^{(k)}|$ for every $k=0,1,\dots,n$.
    Then, induction on $k$ (along with the fact $|\mc B_i^{(0)}| = 0$ for every $i\in N$) shows that $|\mc B_i^{(k)}| = |\mc B_j^{(k)}|$ for every $k=0,1,\dots,n$.
		
	\paragraph{\boldmath $(3\iff 4)$:} For any voter $i\in N$ and $k\in \set{0,1,\dots,n}$, we define \[\mc W_{\lnot i}^{(k)} = \set{S\subseteq N \ : \ \abs{S}=k,\ S\not\ni i,\ f(S)=1}.\]
    Observe that, for any voter $i\in N$, and $k\in \set{1,2,\dots, N}$,
    \[ \mc W_i^{(k)} = \mc B_i^{(k)} \cup \set{S\cup\set{i}: S\in \mc W_{\lnot i}^{(k-1)}}, \]
    and moreover, the two sets on the right hand side of this equality are disjoint.
    Hence, \[ |\mc W_i^{(k)}| = |\mc B_i^{(k)}| + |\mc W_{\lnot i}^{(k-1)}|. \]

    Now, suppose voters $i,j\in N$ are such that $|\mc W_{ i}^{(k)}| = |\mc W_{j}^{(k)}|$ for every $k=0,1,\dots,n$. 
    Then, we must have $|\mc W_{\lnot i}^{(k)}| = |\mc W_{\lnot j}^{(k)}|$ for every $k=0,1,\dots,n$ as well.   
    Hence, by the above (along with the fact $|\mc B_i^{(0)}| = 0$ for every $i\in N$), we must have $|\mc B_i^{(k)}| = |\mc B_j^{(k)}|$ for every $k=0,1,\dots,n$ as well.

	Conversely, suppose voters $i,j\in N$ are such that $|\mc B_{ i}^{(k)}| = |\mc B_{j}^{(k)}|$ for every $k=0,1,\dots,n$.
    Then, induction on $k$ (along with the fact $|\mc W_i^{(0)}| = |\mc W_{\lnot i}^{(0)}|= 0$ for every $i\in N$) shows that $|\mc W_i^{(k)}| = |\mc W_j^{(k)}|$ and $|\mc W_{\lnot i}^{(k)}| = |\mc W_{\lnot j}^{(k)}|$ for every $k=0,1,\dots,n$.
\end{proof}

\begin{corollary}\label{corr:unbiased_winning_coalition_count}
   The voting rule $f$ is unbiased if and only if for all voters $i,j\in N$ and all $k\in\{0,1,2,\ldots,n\}$, we have $w_i^{(k)}=w_j^{(k)}$.
\end{corollary}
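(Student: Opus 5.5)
This corollary follows directly from Proposition~\ref{prop:unbiased_win_coal}. By definition, $w_i^{(k)} = |\mc W_i^{(k)}|$ for every voter $i\in N$ and every $k\in\set{0,1,\dots,n}$: both count the subsets $S\subseteq N$ with $|S|=k$, $S\ni i$ and $f(S)=1$. So the condition ``$w_i^{(k)}=w_j^{(k)}$ for all $i,j\in N$ and all $k$'' is literally item~4 of Proposition~\ref{prop:unbiased_win_coal}. Since that proposition shows items~1 and~4 are equivalent, $f$ is unbiased if and only if this condition holds.

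The plan is therefore a single step: identify the two counts, then invoke the chain of equivalences $(1\iff 2\iff 3\iff 4)$ already established. There is no real obstacle here. All the substance sits in the proposition, namely the linear independence of the Bernstein-type polynomials $p^k(1-p)^{n-k}$ and the two telescoping identities $|\mc A_i^{(k)}| = |\mc B_i^{(k)}|+|\mc B_i^{(k+1)}|$ and $|\mc W_i^{(k)}| = |\mc B_i^{(k)}|+|\mc W_{\lnot i}^{(k-1)}|$.

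One point to double-check is that the ranges of $k$ agree. The corollary quantifies over $k\in\{0,\dots,n\}$, and so does item~4. The boundary values are trivial anyway: for $k=0$ no coalition of size $0$ contains $i$, so $w_i^{(0)}=0$ for all $i$; for $k=n$ the only candidate is $N$, which is winning by neutrality and monotonicity, so $w_i^{(n)}=1$ for all $i$. Since nothing else needs adjusting, this also establishes item~1 of Proposition~\ref{prop:winning_coalition_properties}.
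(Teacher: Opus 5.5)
Your proposal is correct and matches the paper's proof: the paper likewise observes that $w_i^{(k)} = |\mc W_i^{(k)}|$ and invokes the equivalence $(1\iff 4)$ of Proposition~\ref{prop:unbiased_win_coal}. Your boundary checks for $k=0$ and $k=n$ are accurate but not needed for the argument.
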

\begin{proof}
    This follows from Proposition~\ref{prop:unbiased_win_coal}, by observing that $w_i^{(k)} = |\mc W_i^{(k)}|$ for every voter $i\in N$ and $k\in \set{0,1,\dots, N}$.
\end{proof}

Next, we characterize Shapley-Shubik-fair by the winning coalition counts, as follows:

\begin{proposition}
For each voter $i\in N$, the Shapley-Shubik index satisfies \[ \varphi_i = -1+2\cdot \sum_{k=1}^n \frac{ (k-1)! (n-k)!}{n!} \cdot w_i^{(k)} . \]
\end{proposition}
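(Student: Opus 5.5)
Write $c_k = \frac{(k-1)!(n-k)!}{n!}$. Starting from the second expression for $\varphi_i$ in Section~\ref{sec:voting_rules_and_fair_indices}, I split the marginal contribution into two sums:
\[
\varphi_i = \sum_{S\ni i} c_{|S|} f(S) \;-\; \sum_{S\ni i} c_{|S|} f(S\setminus\set{i}).
\]
The first sum equals $\sum_{k=1}^n c_k w_i^{(k)}$ by the definition of $w_i^{(k)}$. So the task is to express the second sum through the same counts. Writing $T=S\setminus\set{i}$, it becomes $\sum_{T\not\ni i} c_{|T|+1} f(T)$, which is a weighted count of winning coalitions \emph{not} containing $i$.

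Next I use neutrality to convert winning coalitions avoiding $i$ into losing-complement coalitions containing $i$. For $T\not\ni i$, we have $f(T)=1-f(N\setminus T)$, and $N\setminus T$ contains $i$ and has size $n-|T|$. Setting $k=n-|T|$ gives $c_{|T|+1}=\frac{(n-k)!(k-1)!}{n!}=c_k$, so the weights match. Hence
\[
\sum_{T\not\ni i} c_{|T|+1} f(T) = \sum_{U\ni i} c_{|U|}\bigl(1-f(U)\bigr) = \sum_{U\ni i} c_{|U|} - \sum_{k=1}^n c_k w_i^{(k)}.
\]
Substituting this back yields $\varphi_i = 2\sum_k c_k w_i^{(k)} - \sum_{U\ni i} c_{|U|}$.

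It remains to show $\sum_{U\ni i} c_{|U|}=1$. There are $\binom{n-1}{k-1}$ sets $U\ni i$ of size $k$, and $\binom{n-1}{k-1}c_k = \frac{(n-1)!}{n!}=\frac1n$. Summing over $k=1,\dots,n$ gives $1$, which finishes the formula.

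The main obstacle is the reindexing step, specifically checking that the complement map sends the weight $c_{|T|+1}$ exactly to $c_{|N\setminus T|}$. This symmetry of the Shapley weights is what makes the argument work, and I would verify it carefully, including the edge cases $T=\emptyset$ and $|T|=n-1$.
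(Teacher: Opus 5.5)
Your proof is correct and is essentially the paper's argument. The paper likewise uses neutrality to rewrite $f(S\setminus\{i\})$ as $1-f\bigl((N\setminus S)\cup\{i\}\bigr)$, which is your complement $N\setminus T$. It also reindexes sizes via the symmetry $c_k=c_{n-k+1}$ and uses $\sum_k c_k\binom{n-1}{k-1}=1$; the only difference is that you organize the computation as a sum over sets rather than size by size.
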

\begin{proof}
We have
\begin{align*}
     \varphi_i &= \sum_{S\subseteq N, S \ni i}
    \frac{ (|S|-1)!\cdot  (n-|S|)!}{n!} \cdot
    \brac{f(S) - f(S\setminus \set{i}}
    \\&= \sum_{k=1}^n \frac{ (k-1)! (n-k)!}{n!}  \sum_{S\subseteq N, S \ni i, \abs{S}=k} \brac{f(S) - f(S\setminus \set{i}}
\end{align*}
Since $f$ is neutral, this equals
\begin{align*}
    \varphi_i &= \sum_{k=1}^n \frac{ (k-1)! (n-k)!}{n!}  \sum_{S\subseteq N, S \ni i, \abs{S}=k} \brac{f(S) - 1 + f( (N\setminus S)\cup \set{i})}
    \\&= \sum_{k=1}^n \frac{ (k-1)! (n-k)!}{n!}  \cdot \brac{w_i^{(k)} -\binom{n-1}{k-1} + w_i^{(n-k+1)} }
    \\&= \sum_{k=1}^n \frac{ (k-1)! (n-k)!}{n!}  \cdot w_i^{(k)} -1 + \sum_{k=1}^n \frac{ (k-1)! (n-k)!}{n!}  \cdot w_i^{(n-k+1)}
    \\&= -1 + 2\cdot  \sum_{k=1}^n \frac{ (k-1)! (n-k)!}{n!}  \cdot w_i^{(k)} .\qedhere
\end{align*}
\end{proof}

The analogous result for Banzhaf index is:

\begin{proposition}
    For each voter $i\in N$, the Banzhaf index satisfies \[ \beta_i = -1 + \frac{1}{2^{n-2}} \sum_{k=1}^n w_i^{(k)}.\]
\end{proposition}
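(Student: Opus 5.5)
The plan is to mirror the Shapley-Shubik computation just above, with the size-dependent weights replaced by the constant weight $1/2^{n-1}$. Starting from the definition,
\[ \beta_i = \frac{1}{2^{n-1}} \sum_{S\subseteq N, S \ni i} \bigl( f(S) - f(S\setminus \set{i}) \bigr), \]
I will split the sum into the term $\sum_{S\ni i} f(S)$ and the term $\sum_{S\ni i} f(S\setminus\set{i})$. Grouping by $\abs{S}=k$, the first term is exactly $\sum_{k=1}^n w_i^{(k)}$.

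For the second term, I will use neutrality: $f(S\setminus\set{i}) = 1 - f\bigl(N\setminus (S\setminus\set{i})\bigr) = 1 - f\bigl((N\setminus S)\cup\set{i}\bigr)$. As $S$ ranges over the $2^{n-1}$ subsets containing $i$, the map $S\mapsto (N\setminus S)\cup\set{i}$ is a bijection onto the same family of subsets containing $i$; it sends size $k$ to size $n-k+1$. Hence
\[ \sum_{S\ni i} f(S\setminus\set{i}) = 2^{n-1} - \sum_{T\ni i} f(T) = 2^{n-1} - \sum_{k=1}^n w_i^{(k)}. \]

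Substituting back gives
\[ \beta_i = \frac{1}{2^{n-1}}\Bigl(2\sum_{k=1}^n w_i^{(k)} - 2^{n-1}\Bigr) = -1 + \frac{1}{2^{n-2}}\sum_{k=1}^n w_i^{(k)}, \]
which is the claimed formula. There is no real obstacle here. The only point needing care is checking that the complement-plus-$i$ map is a bijection on subsets containing $i$. This is the same reindexing $k\mapsto n-k+1$ used in the Shapley-Shubik proof, but simpler, because the weight is constant, so no symmetry of the weights $\frac{(k-1)!(n-k)!}{n!}$ under $k\mapsto n-k+1$ is needed.
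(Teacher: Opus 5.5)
Your proposal is correct and follows essentially the same route as the paper. Both use neutrality to rewrite $f(S\setminus\{i\})$ as $1 - f((N\setminus S)\cup\{i\})$, then reindex via the involution $S\mapsto (N\setminus S)\cup\{i\}$ (sizes $k\mapsto n-k+1$) to obtain two copies of $\sum_{k} w_i^{(k)}$.
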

\begin{proof}
   Since $f$ is neutral, we have
    \begin{align*}
        \beta_i &= \frac{1}{2^{n-1}} \sum_{S\subseteq N, S \ni i} \brac{f(S) - f(S\setminus \set{i}}
        \\&= \frac{1}{2^{n-1}} \sum_{S\subseteq N, S \ni i} \brac{f(S) - 1 + f((N\setminus S)\cup \set{i}}
        \\&= -1 + \frac{1}{2^{n-1}}\sum_{k=1}^n \brac{w_i^{(k)} + w_i^{(n-k+1)}}
        \\&= -1 + \frac{1}{2^{n-2}}\cdot \sum_{k=1}^n  w_i^{(k)}.\qedhere
    \end{align*}
\end{proof}

\section{A Construction of Unbiased Voting Rules}

In this section, we prove the following theorem:

\begin{theorem}\label{thm:unbiased_even_nottwopow}
    Let $n\geq 1$ be an even integer that is not a power of 2.
    Then, there exists an unbiased voting rule on $n$ voters.
\end{theorem}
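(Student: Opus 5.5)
Write $n = 2^a m$ with $m \ge 3$ odd and $a \ge 1$. The plan is to build the rule by composing a rule on $m$ "blocks" with a rule inside each block, and to check unbiasedness through Corollary~\ref{corr:unbiased_winning_coalition_count}. Composition alone does not suffice: composing Majority on $m$ blocks with a rule on $2^a$ voters inside each block would require a resolute unbiased rule on $2^a$ voters. We expect none exists, since Theorem~\ref{thm:intro_ss} excludes powers of $2$. So we do not decompose multiplicatively. Instead we aim for a rule that is fair under a transitive-like structure coming from $\mathbb{Z}_m$, treating voters as pairs $(x,y)$ with $x\in\mathbb{Z}_m$ and $y$ in a set of size $2^a$.

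\textbf{Concrete attempt.} We define $f$ as follows. For each block $x\in \mathbb{Z}_m$, tally the votes of the $2^a$ voters in block $x$. If the block has a strict majority for a candidate, the block votes for that candidate. If it is tied, the block's vote is decided by a tie-breaking rule. Finally, take Majority over the $m$ block votes, which is resolute since $m$ is odd. Such a rule is automatically monotone and neutral provided the tie-breaking is monotone and neutral.

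\textbf{Choosing the tie-break.} The crux is to design the tie-breaking so that every voter has the same counts $w_i^{(k)}$. Two natural choices:
\begin{itemize}
\item Break a tie in block $x$ by a designated voter of block $x+1$, or by the majority of block $x+1$, cycling around $\mathbb{Z}_m$.
\item Break a tie in block $x$ by the tie-broken votes of the other blocks.
\end{itemize}
The symmetry $x\mapsto x+1$ makes all blocks equivalent. What remains is symmetry among the $2^a$ voters inside a block, and this is where the difficulty sits.

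To keep the voters inside a block symmetric, we would try an alternative: split $n=2^a m$ as $2\cdot(2^{a-1}m)$ and use induction on $a$. The base case $a=1$ is $n=2m$. The candidate rule there is: voters are pairs $\{x,x'\}$ for $x\in\mathbb{Z}_m$. Majority is computed within pairs. A split pair $x$ is resolved by the majority outcome of the unsplit pairs, with cyclic adjustment if needed; one could also take $f$ to be a generalization of Icos ($n=6$, $m=3$). The invariance $x\leftrightarrow x'$ together with the $\mathbb{Z}_m$ rotation gives a transitive symmetry group, hence equitability and therefore unbiasedness. The one thing to verify is that the tie-resolution is neutral and monotone and never deadlocks, for instance when all pairs are split.

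For the inductive step from $n$ to $2n$, we would look for a doubling construction that preserves unbiasedness. The attempt is to replace each voter by two voters. When those two voters disagree, their vote is decided by a rule that uses an unbiased rule on the remaining structure. We then verify via Corollary~\ref{corr:unbiased_winning_coalition_count} that $w_i^{(k)}$ is computed by a generating-function identity independent of $i$.

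\textbf{Main obstacle.} The hardest step is showing that the tie-breaking yields equal $w_i^{(k)}$ without full symmetry; in the case where we do not have equitability, this means a direct generating-function count of winning coalitions by size. What we would try first is to find a transitive group action, either $\mathbb{Z}_m\times$ (a group on $2^a$ points) or a wreath-type action. This reduces everything to the easy fact that equitable implies unbiased. Only if that fails would we fall back to explicit counting of the polynomials $\sum_k w_i^{(k)} t^k$.
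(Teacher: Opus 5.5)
\textbf{There is a genuine gap: the proposal never fixes a rule, and the concrete pieces it does give cannot work.}

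As written, this is a research plan rather than a proof. The tie-breaks are left as ``natural choices'' and ``cyclic adjustment if needed'', the doubling step is unspecified, and none of the required verifications is carried out.

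Your base case $n=2m$ with pairs $\{x,x'\}$ leaves the whole problem open. Write $u_1,u_0,s$ for the numbers of pairs lying inside $S$, lying outside $S$, and split by $S$. Then $|S|-m=u_1-u_0$. So ``majority of the unsplit pairs'' is just majority by size, and it is tied \emph{exactly} on the middle layer $|S|=m$. All the content of your rule therefore sits in the unspecified tie-break. That tie-break must be a complementary family of $m$-subsets, and for unbiasedness it must also be balanced (Lemma~\ref{lemma:comp_bal_iff_unbiased}). Constructing such a family is the entire problem.

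Symmetry cannot supply it the way you hope. Suppose the swaps $x\leftrightarrow x'$ were symmetries. Then so would be their product $\tau$. Take $S$ containing one voter from each pair; then $\tau(S)=N\setminus S$, so $f(S)=f(N\setminus S)=1-f(S)$. The all-split case you flag is therefore an outright obstruction, not a technicality.

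The same argument rules out your fallback groups. Let $H$ be transitive on a block $Y$ of $2^a$ voters. Since $|Y|$ is a power of $2$, a Sylow $2$-subgroup of $H$ is transitive on $Y$, and any central involution $z$ of it is fixed-point-free. Hence $\mathbb{Z}_m\times H$ and the wreath products all contain $(x,y)\mapsto(x,zy)$, a fixed-point-free involution of $N$, and no neutral resolute rule is invariant under it. More broadly, equitable rules for all $n=2^am$ fall under Isbell's conjecture. The paper notes this is open, and the conjecture predicts non-existence when $a$ is large relative to $m$. So ``find a transitive group first'' cannot be the main line.

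\textbf{The missing idea is to get balance by counting rather than by symmetry, working directly on the middle layer.} Set $f(S)=1$ for $|S|>n/2$ and $f(S)=0$ for $|S|<n/2$, and let a complementary balanced family $\mc T$ decide the sets with $|S|=n/2$ (Lemma~\ref{lemma:comp_bal_iff_unbiased}). Balance forces each voter into exactly $\frac14\binom{n}{n/2}$ sets of $\mc T$, so $4\mid\binom{n}{n/2}$ is necessary. By Lucas's theorem this holds iff $n$ is not a power of $2$ (Lemma~\ref{lemma:divisibility}); this is where the hypothesis genuinely enters.

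The paper then builds $\mc T$ as follows, with $m=n/2$.
\begin{enumerate}
\item $\Z/(2m-1)\Z$ acts freely on $(m-1)$-subsets, because $\gcd(m-1,2m-1)=1$.
\item The number of orbits is even. Taking half of them gives a balanced $\mc S$ with $|\mc S|=\frac12\binom{2m-1}{m-1}$ (Lemma~\ref{lemma:design_construction}).
\item Define $\mc T=\{S\cup\{2m\}:S\in\mc S\}\cup\{[2m-1]\setminus S: |S|=m-1,\ S\notin\mc S\}$. This family is automatically complementary.
\item A direct count gives $|\mc T_i|=\frac14\binom{2m}{m}$ for every $i$ (Lemma~\ref{lemma:comp_balanced_exist}). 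This includes the special voter $2m$, whose balance comes from the count and not from any symmetry of the construction.
\end{enumerate}
Once the problem is reduced to this single design, your block decomposition, tie-breaking schemes and induction on $a$ are all unnecessary.
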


\subsection{Complementary Balanced Families}

We start by showing that for even $n$, the existence of unbiased (resp. equitable) voting rules is equivalent to the existence of a complementary balanced families (resp. complementary transitive families) of $\frac{n}{2}$-sized subsets, defined as follows:

\begin{definition}
	Let $n\geq 2$ be an even integer, and let $N=\set{1,2,\dots, n}$.
    For a family $\mc T\subseteq 2^N$ of  $\frac{n}{2}$-sized subsets, define the following properties:
	\begin{enumerate}
		\item (Complementary) For every $S\subseteq N$ of size $\frac{n}{2}$, exactly one of the conditions $S\in \mc T$ and $N\setminus S\in \mc T$ holds.
		\item (Balanced) For every $i\in [n]$, the set $\mc T_i = \set{S\in \mc T : S\ni i}$ is of the same size.
		\item (Transitive) The group $\Aut(\mc T) = \set{\sigma\in S_n : \sigma(S) \in \mc T \text{ for all }S\in \mc T}$ is transitive.
	\end{enumerate}
	We call $\mc T$ complementary balanced (resp. transitive) if it satisfies the complementary and balanced (resp. transitive) properties.
\end{definition}
Note that complementary transitive families are also complementary balanced.

\begin{lemma}\label{lemma:comp_bal_iff_unbiased}
	Let $n\geq 2$ be an even integer.
	Then, the following hold:
	\begin{enumerate}
		\item There exists an unbiased voting rule $f:2^N\to \set{0,1}$ if and only if there exists a complementary balanced family of $\frac{n}{2}$-sized subsets.
		\item There exists an equitable voting rule $f:2^N\to \set{0,1}$ if and only if there exists a complementary transitive family of $\frac{n}{2}$-sized subsets.
	\end{enumerate}
\end{lemma}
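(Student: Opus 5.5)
The plan is to pass back and forth between a voting rule $f$ and the family $\mc T = \set{S\subseteq N: |S|=\frac n2,\ f(S)=1}$ of its winning coalitions of size $\frac n2$.

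\textbf{From a rule to a family.} Given $f$, define $\mc T$ as above. Complementarity is immediate: $N\setminus S$ also has size $\frac n2$, and by neutrality exactly one of $S$ and $N\setminus S$ is winning.

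If $f$ is unbiased, Corollary~\ref{corr:unbiased_winning_coalition_count} with $k=\frac n2$ gives $w_i^{(n/2)}=w_j^{(n/2)}$ for all $i,j$. This is exactly $|\mc T_i|=|\mc T_j|$, so $\mc T$ is balanced.

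If $f$ is equitable, every symmetry $\sigma$ of $f$ preserves winning coalitions and sizes, so it maps $\mc T$ into $\mc T$. Hence the symmetry group of $f$ is contained in $\Aut(\mc T)$, and $\Aut(\mc T)$ is transitive.

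\textbf{From a family to a rule.} Given a complementary family $\mc T$, define
\[
f(S)=1 \iff |S|>\tfrac n2, \ \text{ or } \ |S|=\tfrac n2 \text{ and } S\in\mc T.
\]

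\emph{Neutrality.} If $|S|\ne \frac n2$, then exactly one of $S$ and $N\setminus S$ has size greater than $\frac n2$. If $|S|=\frac n2$, complementarity says exactly one of $S$ and $N\setminus S$ lies in $\mc T$. In both cases $f(N\setminus S)=1-f(S)$.

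\emph{Monotonicity.} Let $S\subsetneq T$ with $f(S)=1$. Then $|S|\ge\frac n2$, so $|T|>\frac n2$ and $f(T)=1$. If $S=T$ there is nothing to check.

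\emph{Unbiasedness.} We compute the counts
\[
w_i^{(k)}=\begin{cases} 0, & k<\frac n2,\\[2pt] |\mc T_i|, & k=\frac n2,\\[2pt] \binom{n-1}{k-1}, & k>\frac n2.\end{cases}
\]
None of these depend on $i$ when $\mc T$ is balanced. Corollary~\ref{corr:unbiased_winning_coalition_count} then gives that $f$ is unbiased.

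\emph{Equitability.} If $\sigma\in\Aut(\mc T)$, then $\sigma$ preserves sizes and maps $\mc T$ into $\mc T$. Since $\sigma$ is a bijection and $\mc T$ is finite, it maps $\mc T$ onto $\mc T$, and so does $\sigma^{-1}$. Therefore $\sigma(S)$ is winning if and only if $S$ is winning, so $\sigma$ is a symmetry of $f$. If $\Aut(\mc T)$ is transitive, $f$ is equitable.

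There is no substantive obstacle here. The only points needing care are the equitable direction, where we must check that $\Aut(\mc T)$ really preserves membership in both directions, and the boundary case $|S|=\frac n2$ in the neutrality and monotonicity checks.
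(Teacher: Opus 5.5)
Your proof is correct and takes the same route as the paper. In one direction you restrict $f$ to its winning coalitions of size $\frac n2$; in the other you extend $\mc T$ by declaring every larger set winning, and you verify both directions through Corollary~\ref{corr:unbiased_winning_coalition_count}. The paper leaves the neutrality, monotonicity and $w_i^{(k)}$ computations as ``easily checked''; you supply them, along with the finiteness argument that $\Aut(\mc T)$ preserves membership in both directions.
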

\begin{proof}
    Suppose there exists an unbiased (resp. equitable) voting rule $f:2^N\to \set{0,1}$.
    Then, the collection \[\mc T = \set{S\subseteq [n]:\ \abs{S}=\frac{n}{2},\ f(S)=1}\]
    is easily checked to be complementary balanced (resp. transitive); see Corollary~\ref{corr:unbiased_winning_coalition_count}.
    
    Conversely, suppose that there exists a complementary balanced (resp. transitive) family $\mc T\subseteq 2^N$ of $\frac{n}{2}$-sized subsets.
    Then, the voting rule $f:2^N\to \set{0,1}$ given by \[ f(S) = \begin{cases}
		1, & \abs{S}=\frac{n}{2},\ S\in \mc T \\ 1,&  \abs{S}>\frac{n}{2} \\ 0, &  \text{otherwise}
	\end{cases}\]
	is unbiased (resp. equitable); see Corollary~\ref{corr:unbiased_winning_coalition_count}.
\end{proof}


\subsection{Design Construction}

A special ingredient in our construction of unbiased voting rules is a \emph{combinatorial design}, which we construct as follows:

\begin{lemma}\label{lemma:design_construction} (Balanced Family of Measure Half)
	Let $k\geq 1$ be an integer such that $\binom{2k+1}{k}$ is even.
	Then, there exists a collection $\mc S$ of $k$-sized subsets of $\set{1,2,\dots,2k+1}$, such that:
    \begin{enumerate}
        \item $\abs{\mc S} = \frac{1}{2}\binom{2k+1}{k}$, and
        \item Each element $i\in \set{1,2,\dots,2k+1}$ occurs in an equal number of subsets in $\mc S$.
    \end{enumerate}
\end{lemma}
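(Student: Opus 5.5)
The plan is to build $\mc S$ as a union of orbits of the cyclic group $\Z_{2k+1}$, acting on $\set{1,\dots,2k+1}\cong \Z_{2k+1}$ by translation $x\mapsto x+1$. Any family of $k$-subsets that is invariant under this action is automatically balanced. The reason is that the group acts transitively on elements and permutes the sets within each orbit, so every element lies in the same number of sets of any given orbit. Hence any union of orbits satisfies property~2, and the whole problem reduces to choosing a union of orbits of total size exactly $\frac12\binom{2k+1}{k}$.

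As a first sanity check, I will verify that the target is consistent with integrality. In a balanced family of size $\frac12\binom{2k+1}{k}$, each element lies in $\frac{k}{2k+1}\cdot\frac12\binom{2k+1}{k}=\frac12\binom{2k}{k-1}$ sets. Now $\binom{2k+1}{k}=\binom{2k}{k}+\binom{2k}{k-1}$, and $\binom{2k}{k}=2\binom{2k-1}{k-1}$ is even for $k\ge1$. So $\binom{2k}{k-1}$ is even exactly when $\binom{2k+1}{k}$ is, and the hypothesis is precisely what makes the count an integer.

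Next, write $m=2k+1$. Every orbit has size $d$ for some divisor $d\mid m$, and $d$ is odd. An orbit of size $d<m$ consists of sets invariant under translation by $d$. Such sets correspond to $(kd/m)$-subsets of $\Z_d$, so they exist only when $\frac{m}{d}\mid k$; since $\gcd(m,k)=1$, this forces $d=m$. Thus every orbit is free, of size exactly $m$, and the number of orbits is $\binom{m}{k}/m$. Call this number $r$.

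If $r$ is even, we take any $r/2$ orbits and are done. The remaining case is the main obstacle: $r$ odd while $\binom{m}{k}=mr$ is even. This cannot happen, because $m$ is odd, so $mr$ even forces $r$ even. So I expect the real content to be exactly this freeness observation, namely the coprimality $\gcd(2k+1,k)=1$, and the step I would double-check most carefully is the claim that a set fixed by a nontrivial translation must have size divisible by the order of that translation.
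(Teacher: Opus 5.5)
Your proposal is correct and follows essentially the same route as the paper: take $\mathcal S$ to be the union of half of the translation orbits of $\mathbb{Z}/(2k+1)\mathbb{Z}$ on $k$-subsets, and use $\gcd(k,2k+1)=1$ to show every orbit has size exactly $2k+1$, so the number of orbits is even. The paper proves freeness by noting that a set fixed by translation by $i\neq 0$ splits into $\langle i\rangle$-orbits of size $r$, so $r$ divides both $k$ and $2k+1$; this is the same fact as your stabilizer/coset argument, and balance then follows from invariance under the transitive group, as you say.
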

\begin{proof}
	Let $k\geq 1$ be an integer such that $\binom{2k+1}{k}$ is even.
    Let $G = \Z/(2k+1)\Z$, and let $\binom{G}{k} = \set{S\subseteq G : \abs{S}=k}$ denote the collection of all $k$-sized subsets of $G$.
	We shall exhibit a collection $\mc S\subseteq \binom{G}{k}$ satisfying the required properties.
    
	First, we show that for any $S\in \binom{G}{k}$, its $G$-orbit $G\cdot S = \set{S+i \pmod{2k+1} : i\in G}$ is of size $2k+1$.\footnote{We use $S+i \pmod{2k+1}$ to denote the set $\set{a+i\pmod{2k+1}: a\in S}$.} Equivalently, we may show that for any $i\in G\setminus\set{0}$, it holds that $S+i \pmod{2k+1} \not= S$.
    For this purpose, consider any $i\in G\setminus\set{0}$, let $\ip{i}\subseteq G$ denote the cyclic subgroup generated by $i$, and let $r = |\ip{i}| > 1$; note that $r$ must divide $|G|=2k+1$.
    Suppose, for the sake of contradiction, that $S+i \pmod{2k+1} = S$.
    Then, the action of $\ip{i}$ on $S$ partitions the set $S$ into $\ip{i}$-orbits, each of size $r$; in particular, $r$ divides $|S|=k$.
    Hence, $r$ divides $\gcd(k,2k+1)=1$, which is a contradiction.
    
	The above shows that $\binom{G}{k}$ is partitioned into $\frac{1}{2k+1} \binom{2k+1}{k}$ disjoint $G$-orbits, each of size $2k+1$.
	Since $\binom{2k+1}{k}$ is even and $2k+1$ is odd, $\frac{1}{2k+1} \binom{2k+1}{k}$ is also even.
	Now, we let $\mc S$ to be the union of any $\frac{1}{2}\cdot \frac{1}{2k+1} \binom{2k+1}{k}$ such orbits.
	Since $G$ is a transitive group, and $\mc S$ is $G$-invariant, this satisfies the required properties.
\end{proof}


\subsection{Complementary Balanced Families via Designs}

Using the design in the previous section, we construct a complementary balanced family.

\begin{lemma}\label{lemma:comp_balanced_exist}
	Let $n\geq 1$ be an even integer such that $\binom{n}{n/2}$ is divisible by 4.
	Then, there exists a complementary balanced family of $\frac{n}{2}$-sized subsets.
\end{lemma}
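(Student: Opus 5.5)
The plan is to single out one voter, say $n$, and reduce the problem to a choice of $(m-1)$-subsets of the remaining $n-1$ points, where $n = 2m$. That choice will be supplied by Lemma~\ref{lemma:design_construction} with $k = m-1$, so that $2k+1 = n-1$.

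\textbf{Reduction.} Let $R = N\setminus\set{n}$, so $|R| = 2k+1$. Every $m$-subset of $N$ either contains $n$ or does not.
\begin{itemize}
\item If it contains $n$, it has the form $A\cup\set{n}$ with $A\in\binom{R}{k}$.
\item Its complement is then $R\setminus A$, an $m$-subset not containing $n$.
\end{itemize}
So the complementary pairs $\set{S, N\setminus S}$ of $m$-subsets are in bijection with $\binom{R}{k}$. A complementary family $\mc T$ is exactly a choice, for each $A\in\binom{R}{k}$, of either $A\cup\set{n}$ or $R\setminus A$.

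\textbf{Invoking the design.} We have
\[ \binom{2k+1}{k} = \binom{n-1}{m-1} = \tfrac12\binom{n}{m}, \]
and this is even because $4 \mid \binom{n}{m}$. Hence Lemma~\ref{lemma:design_construction} applies and gives a family $\mc S\subseteq\binom{R}{k}$ with $|\mc S| = \tfrac12\binom{2k+1}{k}$ in which every point of $R$ lies in the same number $d$ of sets. Define
\[ \mc T = \set{A\cup\set{n} : A\in\mc S} \cup \set{R\setminus A : A\in \binom{R}{k}\setminus\mc S}. \]
This family is complementary by the reduction above.

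\textbf{Balance check.} This is the only real computation.
\begin{itemize}
\item Voter $n$ lies in exactly $|\mc S|$ sets of $\mc T$.
\item Counting incidences, $d(2k+1) = k|\mc S|$. Hence $d = \tfrac12\binom{2k}{k-1}$, which is half the number of $k$-subsets of $R$ containing a fixed $i\in R$.
\item Consequently the complementary family $\mc S^c = \binom{R}{k}\setminus\mc S$ also contains each $i\in R$ exactly $\binom{2k}{k-1}-d = d$ times.
\item A point $i\in R$ lies in $A\cup\set{n}$ for the $d$ sets $A\in\mc S$ containing $i$. It lies in $R\setminus A$ for the sets $A\in\mc S^c$ with $i\notin A$, of which there are $|\mc S^c| - d$.
\end{itemize}
So $i$ lies in $d + |\mc S^c| - d = |\mc S^c| = |\mc S|$ sets of $\mc T$, the same count as voter $n$. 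Thus $\mc T$ is balanced.

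I expect no serious obstacle beyond the bookkeeping. The key identity is that $d$ is exactly half of $\binom{2k}{k-1}$, which is forced by $|\mc S|$ being exactly half of $\binom{2k+1}{k}$.
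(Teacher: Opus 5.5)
Your proof is correct and follows the paper's argument essentially verbatim: you single out one voter, apply Lemma~\ref{lemma:design_construction} with $k=m-1$ to the remaining $2m-1$ points, and build the same family $\mc T = \mc U\cup\mc V$ with the same complementarity argument. The only difference is in the balance check, where you note that $\binom{R}{k}\setminus\mc S$ is also $d$-regular; the paper instead reaches the same count $\frac14\binom{n}{n/2}$ by explicit binomial arithmetic.
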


\begin{proof}
	Let $n\geq 1$ be an even integer such that $\binom{n}{n/2}$ is divisible by 4, and let $m = \frac{n}{2} \geq 3$.
	We show the existence of a complementary balanced family $\mc T$ of $\frac{n}{2}$-sized subsets.
	
	Since $\binom{2m}{m} = 2\cdot \binom{2m-1}{m-1}$, the lemma hypothesis implies that $\binom{2m-1}{m-1}$ is even.
    Hence, by Lemma~\ref{lemma:design_construction} (with $k=m-1$), we can find a family $\mc S$ of $(m-1)$-sized subsets of $\set{1,2,\dots,2m-1}$, of size $\abs{\mc S} = \frac{1}{2}\binom{2m-1}{m-1}$, such that each element $i\in \set{1,2,\dots,2m-1}$ occurs in an equal number of subsets in $\mc S$.
    Now, we define:
    \[ \mc U  = \set{S \cup \set{2m} : S\in \mc S}, \]
    \[\mc V  = \set{S: S\subseteq \set{1,2,\dots,2m-1},\ \abs{S}=m,\ \set{1,2,\dots,2m-1}\setminus S \not\in \mc S}.\]
    In words, $\mc U$ adds the element $2m$ to every element of $\mc S$, and $\mc V$ includes the complements (within $\set{1,2,\dots,2m-1}$) of sets not in $\mc S$.
	Observe that $\mc U$ and $\mc V$ are disjoint, as each set in $\mc U$ contains the element $2m$, and no set in $\mc V$ contains the element $2m$.
	We define our required family $\mc T$ to be $\mc T:=\mc U \cup \mc V$.
        
	\paragraph{Complementary:} We show that the collection $\mc T$ is complementary.\footnote{Indeed, this holds for any family $\mc T$ defined as above for \emph{any} collection $\mc S$ of $(m-1)$-sized subsets of $\set{1,2,\dots,2m-1}$.}
    Consider any set $T\subseteq \set{1,2,\dots,2m}$ of size $m$; we shall show that exactly one of $T$ and $\overline{T}=\set{1,2,\dots,2m}\setminus T$ lie in the family $\mc T$.
    Without loss of generality, we may assume that $2m\in T$ (else we can work with the complement $\overline{T}$ of $T$).
    Denoting $S = T\setminus \set{2m}$, we have:
	\begin{enumerate}
		\item Suppose $S\in \mc S$: In this case, $T \in \mc U \subseteq \mc T$ by definition.

        On the other hand, $\overline{T}\not\in \mc U$ since $2m\not\in\overline{T}$, and $\overline{T}\not\in \mc V$ since $\set{1,2,\dots,2m-1}\setminus \overline{T} = S \in \mc S$.
		Hence, $\overline{T}\not\in \mc T$.
		
		\item Suppose $S\not\in \mc S$.
		In this case $T\not\in \mc U$ since $S\not\in \mc S$, and $T\not\in \mc V$ since $2m\in T$.
		Hence, $T\not\in \mc T$.
		
        On the other hand, we have $\overline{T} = \set{1,2,\dots,2m-1}\setminus S $ is of size $m$, and satisfies that $\set{1,2,\dots,2m-1}\setminus \overline{T} = S\not\in \mc S$.
		Hence $\overline{T}\in \mc V\subseteq \mc T$.
	\end{enumerate}

    \paragraph{Balanced:} We show that for every $i\in \set{1,2,\dots,2m}$, the set $\mc T_i = \set{T\in \mc T: T\ni i}$ is of size $\frac{1}{4}\binom{2m}{m}$.
    In particular, this implies that $\mc T$ is balanced.
    
	For $i=2m$, this follows by observing that $\mc T_{2m} = \mc U$, and hence $\abs{\mc T_{2m}} = \abs{\mc S}=\frac{1}{4}\binom{2m}{m}$.
	Now, consider any index $i\in \set{1,2,\dots,2m-1}$, and let $\mc S_i = \set{S\in \mc S: S\ni i}$; note that this has size $|\mc S_i| = \frac{m-1}{2m-1}\cdot \frac{1}{2}\binom{2m-1}{m-1} = \frac{m-1}{4(2m-1)}\binom{2m}{m}$ using the fact that each $\mc S_i$ is of equal size.
	
    Observe that $\abs{\mc T_i \cap \mc U} = \abs{\mc S_i}$, and
    \begin{align*}
        \abs{\mc T_i\cap \mc V} &= \abs{\set{S: S\subseteq \set{1,2,\dots,2m-1},\ \abs{S}=m,\ \set{1,2,\dots,2m-1}\setminus S \not\in \mc S, i\in S}}
        \\&= \abs{\set{S: S\subseteq \set{1,2,\dots,2m-1},\ \abs{S}=m-1,\ S \not\in \mc S, i\not\in S}}
        \\&= \abs{\set{S: S\subseteq \set{1,2,\dots,2m-1},\ \abs{S}=m-1,\ i\not\in S}} \\&\qquad- \abs{\set{S: S\subseteq \set{1,2,\dots,2m-1},\ \abs{S}=m-1,\ S \in \mc S, i\not\in S}}
        \\&= \binom{2m-2}{m-1} - \brac{\abs{\mc S}-\abs{\mc S_i}}
        \\&= \binom{2m}{m}\cdot \sqbrac{\frac{m}{2(2m-1)} - \frac{1}{4} + \frac{m-1}{4(2m-1)}}
        \\&= \binom{2m}{m}\cdot \frac{m}{4(2m-1)}.
    \end{align*}
	Hence,\footnote{Note that the above logic would apply to any family $\mc T$ of winning coalitions of a voting rule with an even number $n=2m$ of voters, where we fix voter $2m$ and define $\mc U$ to be all the winning coalitions of size $m$ which include voter $2m$, and $\mc V$ to be all those that do not contain voter $2m$, and for each $i \in \set{1,2,\dots,2m-1}$ consider $\alpha_i = \abs{ \set{ U \in \mc U : i \in U }}$. The above logic proves that if voting rule is unbiased then $\alpha_i$ is independent of the choice of $i\in \set{1,2,\dots,2m-1}$; and that $\abs{ \set{ V \in \mc V : i \in V }} = \binom{2m-2}{m-1} - \abs{\mc U}+\alpha_i$ is also independent of the choice of $i$.} we get
	\[ \abs{\mc T_i} = \abs{\mc T_i\cap\mc U}+\abs{\mc T_i\cap\mc V} = \frac{1}{4}\binom{2m}{m}.\qedhere\]
\end{proof}

\subsection{Combining Things}

We characterize the integers for which the divisibility condition above holds.
\begin{lemma}\label{lemma:divisibility}
	Let $m\geq 1$ be an integer.
	Then $\binom{2m}{m}$ is divisible by 4 if and only if $m \not\in \set{2^r: r=0,1,2,\dots}$.
\end{lemma}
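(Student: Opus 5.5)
We will compute the $2$-adic valuation of $\binom{2m}{m}$ with Kummer's theorem: for a prime $p$, the exponent of $p$ in $\binom{a+b}{a}$ equals the number of carries when adding $a$ and $b$ in base $p$. Taking $p=2$ and $a=b=m$, adding $m$ to itself in binary is a left shift by one position. A carry leaves position $j$ exactly when the $j$-th binary digit of $m$ is $1$, because doubling means no incoming carry can change this. Hence $v_2\bigl(\binom{2m}{m}\bigr)=s_2(m)$, the number of ones in the binary expansion of $m$.

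Then $4\mid\binom{2m}{m}$ if and only if $s_2(m)\ge 2$. Since $m\ge1$, this fails exactly when $s_2(m)=1$, that is, when $m=2^r$ for some $r\ge 0$. This is precisely the claimed characterization.

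If we want to avoid citing Kummer, we can derive the same identity from Legendre's formula. By Legendre, $v_2(m!)=m-s_2(m)$. Applying this to $m$ and to $2m$, and using $s_2(2m)=s_2(m)$, gives
\[ v_2\Bigl(\binom{2m}{m}\Bigr)=v_2((2m)!)-2v_2(m!)=(2m-s_2(m))-2(m-s_2(m))=s_2(m). \]
The only step that needs care is $v_2(m!)=m-s_2(m)$. It follows from $\sum_{j\ge1}\lfloor m/2^j\rfloor=m-s_2(m)$, which is checked by summing the contribution of each binary digit: a digit $1$ in position $i$ contributes $2^{i-1}+\dots+1=2^i-1$. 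The rest of the argument is immediate.
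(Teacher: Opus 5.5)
Your proof is correct, and it takes a different route from the paper's. The paper uses $\binom{2m}{m}=2\binom{2m-1}{m-1}$ to reduce divisibility by $4$ to a parity question. It then applies Lucas's theorem modulo $2$, which it states in its appendix. The binary digits of $2m-1$ are those of $m-1$ followed by a final $1$, so the Lucas product $\binom{1}{a_0}\prod_{i=1}^{r-1}\binom{a_{i-1}}{a_i}$ is nonzero exactly when every binary digit of $m-1$ is $1$, i.e.\ when $m$ is a power of $2$; the case $m=1$ is checked separately. You instead compute the exact valuation $v_2\bigl(\binom{2m}{m}\bigr)=s_2(m)$, via Kummer's theorem or via Legendre's formula. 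Your carry argument is right: the column sum $2d_j+c_j$ reaches $2$ if and only if $d_j=1$. Your identity $\sum_{j\ge 1}\lfloor m/2^j\rfloor=m-s_2(m)$ is also right. Your route gives more information, for example $2^k\mid\binom{2m}{m}$ if and only if $s_2(m)\ge k$. It also treats $m=1$ uniformly, and the Legendre version is fully self-contained. The paper's route needs only the mod-$2$ statement, where its halving step removes exactly one factor of $2$ so that Lucas's oddness criterion suffices. Its advantage is that it reuses the same tool the paper needs again in the Shapley--Shubik non-existence lemma, where all $\binom{2^r-1}{k-1}$ must be odd.
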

\begin{proof}
	Let $m\geq 1$ be an integer. Since $\binom{2m}{m} = 2\cdot \binom{2m-1}{m-1}$, the condition we care about is equivalent to $\binom{2m-1}{m-1}$ being divisible by 2.
	
	For $m=1$ the statement is true, so we assume $m>1$ henceforth.
	Suppose that the binary representation of $m-1$ is $a_{r-1}a_{r-2}\dots a_0$, for some $r\geq 1$, each $a_i\in \set{0,1}$, and $a_{r-1}=1$.
	Then, the binary representation of $2m-1 =2(m-1)+1$ is $a_{r-1}a_{r-2}\dots a_0 1$, and by Lucas's Theorem (Theorem~\ref{thm:lucas}), we have
	\[ \binom{2m-1}{m-1} \equiv \binom{1}{a_0}\cdot \prod_{i=1}^{r-1}\binom{a_{i-1}}{a_i}\cdot \binom{a_{r-1}}{0} \pmod{2}.\]
	This is non-zero if and only if $a_i=1$ for each $i$, or equivalently $m-1=2^{r}-1$.
\end{proof}

Finally, we prove our main theorem:
\begin{proof}[Proof of Theorem~\ref{thm:unbiased_even_nottwopow}]

Let $n\geq 1$ be an even integer that is not a power of two.
By Lemma~\ref{lemma:divisibility}, we have that $\binom{n}{n/2}$ is divisible by 4.
Now, the result follows by Lemma~\ref{lemma:comp_balanced_exist} and Lemma~\ref{lemma:comp_bal_iff_unbiased}.
\end{proof}
\section{Existence of Fair Voting Rules}

In this section, we prove our main results. 
We characterize for which $n$ a fair voting rule $f:2^N\to \set{0,1}$ exists, under different notions of fairness.
We begin by considering unbiased voting rules.

\subsection{Unbiased Voting Rules}

First, we show that unbiased voting rules exist for all $n$ except even powers of two.

\begin{theorem}
    Let $n\geq 1$ be an integer.
    An unbiased voting rule $f:2^N\to\set{0,1}$ exists if and only if $n \not \in \set{2,4,8,16,\dots}$. 
\end{theorem}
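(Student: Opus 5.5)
We split according to the parity and form of $n$. For odd $n$, Majority is equitable: every permutation of the voters is a symmetry and the symmetric group is transitive. By the remark in Section~\ref{sec:prelims-other-stuff}, equitable rules are unbiased. One can also see this directly from Corollary~\ref{corr:unbiased_winning_coalition_count}, since for Majority we have $w_i^{(k)}=\binom{n-1}{k-1}$ when $k>n/2$ and $w_i^{(k)}=0$ otherwise, independent of $i$. For even $n$ that is not a power of two, Theorem~\ref{thm:unbiased_even_nottwopow} gives the existence directly. This proves the ``if'' direction.

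For the ``only if'' direction, let $n=2^r$ with $r\geq 1$ and suppose $f$ is an unbiased voting rule. By Lemma~\ref{lemma:comp_bal_iff_unbiased}, there is a complementary balanced family $\mc T$ of $m$-sized subsets, where $m=n/2=2^{r-1}$. Complementarity pairs each $m$-set with its complement, and exactly one of each pair lies in $\mc T$. Hence $|\mc T|=\frac12\binom{2m}{m}$. Double counting incidences gives $\sum_{i}|\mc T_i| = m|\mc T|$. Balancedness then gives
\[ |\mc T_i| = \frac{m}{2m}\cdot\frac12\binom{2m}{m} = \frac14\binom{2m}{m} \]
for every $i$. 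Since $|\mc T_i|$ is an integer, $4$ must divide $\binom{2m}{m}$. But $m$ is a power of two, so Lemma~\ref{lemma:divisibility} says $\binom{2m}{m}$ is not divisible by $4$, which is a contradiction.

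The case $n=2$ ($m=1$) is covered by the same argument, since $\binom{2}{1}=2$. It also matches the footnote showing that only dictatorships exist there. The only step needing care is to confirm that the forward direction of Lemma~\ref{lemma:comp_bal_iff_unbiased} applies to an arbitrary unbiased rule. The family $\mc T$ of size-$m$ winning coalitions is complementary by neutrality and resoluteness. It is balanced because $|\mc T_i|=w_i^{(m)}$, which is independent of $i$ by Corollary~\ref{corr:unbiased_winning_coalition_count}. With that verified, the divisibility obstruction is immediate, and I expect no real obstacle beyond it.
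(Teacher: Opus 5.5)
Your proposal is correct and matches the paper's proof. For existence, you use Majority for odd $n$ and Theorem~\ref{thm:unbiased_even_nottwopow} for even non-powers of two. For the converse, the complementary balanced family from Lemma~\ref{lemma:comp_bal_iff_unbiased} forces $4 \mid \binom{n}{n/2}$, contradicting Lemma~\ref{lemma:divisibility}; your explicit double count giving $|\mc T_i| = \frac14\binom{2m}{m}$ and your check of that lemma's forward direction just spell out steps the paper leaves implicit.
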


The proof of this theorem is carried out in two parts, as follows.

\begin{lemma}\label{lemma:unbiased_existence} (Existence of Unbiased Voting Rules)
    Let $n\geq 1$ be an integer.
    An unbiased voting rule $f:2^N\to\set{0,1}$ exists if $n \not \in \set{2,4,8,16,\dots}$. 
\end{lemma}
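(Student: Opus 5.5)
The plan is to split into two cases according to the parity of $n$, since the statement excludes exactly the powers of two other than $1$.

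\emph{Odd $n$.} Here I will take $f$ to be simple majority: $f(S)=1$ if and only if $|S|>\frac{n}{2}$. Because $n$ is odd, $|S|$ and $|N\setminus S|$ cannot both be at least $\frac{n}{2}$, so $f$ is resolute and neutral; monotonicity is immediate. This covers $n=1$ as well, where $f$ is the dictatorship of the sole voter. To see that $f$ is unbiased, note that $f(S)$ depends only on $|S|$. Hence $f$ is invariant under every permutation of $N$, and so for each $k$ we get $w_i^{(k)}=\binom{n-1}{k-1}$ if $k>\frac{n}{2}$ and $w_i^{(k)}=0$ otherwise. This count does not depend on $i$, so Corollary~\ref{corr:unbiased_winning_coalition_count} gives that $f$ is unbiased. (Alternatively, $f$ is equitable because its symmetry group is all of $S_n$, and equitable rules are unbiased.)

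\emph{Even $n$.} If $n$ is even and not in $\set{2,4,8,16,\dots}$, then $n$ is not a power of two, since $n\ge 2$ and the only even powers of two are exactly the excluded values. Theorem~\ref{thm:unbiased_even_nottwopow} therefore applies directly and supplies an unbiased voting rule on $n$ voters. That theorem was proved via Lemma~\ref{lemma:divisibility}, Lemma~\ref{lemma:comp_balanced_exist} and Lemma~\ref{lemma:comp_bal_iff_unbiased}.

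No step here is a real obstacle, since the substantive work was already done in Theorem~\ref{thm:unbiased_even_nottwopow}. The only points that need care are checking resoluteness and neutrality of majority for odd $n$, and confirming that the parity split exhausts every $n$ not in the excluded set.
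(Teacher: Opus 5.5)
Your proposal is correct and follows the paper's proof: majority for odd $n$ (including $n=1$), and Theorem~\ref{thm:unbiased_even_nottwopow} for even $n$ that are not powers of two. Your added check that majority is unbiased, via $w_i^{(k)}=\binom{n-1}{k-1}\cdot\ind[k>n/2]$ and Corollary~\ref{corr:unbiased_winning_coalition_count}, fills in the detail the paper leaves implicit.
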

\begin{proof}
    Let $n\geq 1$ be an integer.
    For odd $n$, the majority voting rule on $n$ voters works.
    For even $n$, the result follows by Theorem~\ref{thm:unbiased_even_nottwopow}.
\end{proof}

\begin{lemma}\label{lemma:unbiased_nonexistence} (Non-Existence of Unbiased Voting Rules)
    Let $n\geq 1$ be an integer.
    An unbiased voting rule $f:2^N\to\set{0,1}$ exists only if $n \not \in \set{2,4,8,16,\dots}$. 
\end{lemma}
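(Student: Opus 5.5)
Suppose $n=2^r$ with $r\ge 1$ and that $f$ is an unbiased voting rule on $N$; we derive a contradiction by a divisibility argument at the middle layer. Let $m=n/2=2^{r-1}$. By Lemma~\ref{lemma:comp_bal_iff_unbiased} (or directly via Corollary~\ref{corr:unbiased_winning_coalition_count}), the family
\[\mc T=\set{S\subseteq N:\ \abs{S}=m,\ f(S)=1}\]
is complementary and balanced. Hence every voter lies in the same number $c=\abs{\mc T_i}$ of sets of $\mc T$.

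First we compute $\abs{\mc T}$. Complementarity pairs each $m$-set with its complement, and exactly one set of each pair lies in $\mc T$. So $\abs{\mc T}=\frac12\binom{2m}{m}=\binom{2m-1}{m-1}$.

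Next we double count incidences $(i,S)$ with $i\in S\in\mc T$. This gives
\[ n\cdot c=m\cdot\abs{\mc T},\qquad\text{so}\qquad c=\frac{m}{2m}\abs{\mc T}=\frac12\binom{2m-1}{m-1}. \]
For $c$ to be an integer, $\binom{2m-1}{m-1}$ must be even. Equivalently, $\binom{2m}{m}$ must be divisible by $4$.

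But $m=2^{r-1}$ is a power of two, so by Lemma~\ref{lemma:divisibility} the number $\binom{2m}{m}$ is \emph{not} divisible by $4$. This is a contradiction, so no unbiased voting rule exists for $n\in\set{2,4,8,16,\dots}$.

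The only point needing care is that $c$ is an integer forced to equal half of an odd number. This rests entirely on Lemma~\ref{lemma:divisibility}, whose Lucas-theorem computation shows $\binom{2m-1}{m-1}$ is odd exactly when $m$ is a power of two. The case $m=1$ ($n=2$) is included, since $\binom{1}{0}=1$ is odd.
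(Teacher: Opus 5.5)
Your proposal is correct and follows the paper's proof. Both pass to the complementary balanced middle-layer family via Lemma~\ref{lemma:comp_bal_iff_unbiased}, observe that each voter must lie in exactly $\frac14\binom{n}{n/2}$ of its sets, and conclude with Lemma~\ref{lemma:divisibility}. The only difference is that you write out the double count giving this value, which the paper leaves implicit.
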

\begin{proof}
     Let $n\geq 2$ be an even integer such that there exists an unbiased voting rule on $n$ voters.
     By Lemma~\ref{lemma:comp_bal_iff_unbiased}, there exists a complementary balanced family $\mc T$ of $\frac{n}{2}$-sized subsets.
     In any such family, the number of subsets containing any fixed voter $i\in N$ must equal $\frac{1}{4}\binom{n}{n/2}$, and hence $\binom{n}{n/2}$ must be divisible by 4.
     By Lemma~\ref{lemma:divisibility}, this implies that $n \not \in \set{2,4,8,16,\dots}$.
\end{proof}

\subsection{Shapley-Shubik-Fair Voting Rules}

Second, we consider Shapley-Shubik-fair voting rules.
We show that---like for unbiasedness---such voting rules exist for all $n$ except even powers of two.

\begin{theorem}
    Let $n\geq 1$ be an integer.
    A Shapley-Shubik-fair voting rule $f:2^N\to\set{0,1}$ exists if and only if $n \not \in \set{2,4,8,16,\dots}$. 
\end{theorem}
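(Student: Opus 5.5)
Existence is immediate. For $n \notin \{2,4,8,16,\dots\}$, Lemma~\ref{lemma:unbiased_existence} supplies an unbiased voting rule, and by Proposition~\ref{prop:winning_coalition_properties} every unbiased rule is Shapley-Shubik-fair. The real content is the non-existence direction. We assume $n=2m$ with $m=2^{r}$, and suppose for contradiction that $f$ is Shapley-Shubik-fair. Since $\sum_i \varphi_i = 1$, fairness forces $\varphi_i = 1/n$ for every $i$.

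The plan is to isolate the middle layer using the formula in Proposition~\ref{prop:winning_coalition_properties}(2). For $k \neq m, m+1$ the counts $w_i^{(k)}$ are forced: a set of size $k>m+1$ is always winning, and one of size $k<m$ never is. By neutrality the layer $k=m+1$ is determined by layer $m$. A set $S \ni i$ with $|S|=m+1$ is winning unless its complement (of size $m-1$) is winning, which never happens, so $w_i^{(m+1)} = \binom{n-1}{m}$ is also forced. Hence $\varphi_i = C + 2\cdot\frac{(m-1)!\,m!}{n!}\, w_i^{(m)}$ for a constant $C$ independent of $i$. Equal $\varphi_i$ therefore gives equal $w_i^{(m)}$ for all $i$, so the middle-layer family $\mathcal T = \{S : |S| = m,\ f(S)=1\}$ is balanced. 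It is also complementary, by resoluteness and neutrality.

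From here we argue exactly as in Lemma~\ref{lemma:unbiased_nonexistence}. The family $\mathcal T$ contains exactly half of the $m$-subsets, namely $\frac12\binom{n}{m}$ sets. Double counting the incidences gives $n\cdot w^{(m)} = m\cdot \frac12\binom{n}{m}$, so $w^{(m)} = \frac14\binom{n}{m}$, which must be an integer. By Lemma~\ref{lemma:divisibility}, $4 \nmid \binom{2m}{m}$ when $m$ is a power of two, which is the contradiction. This argument covers every $n=2^{r+1}$ with $r\geq 0$; the case $n=2$ can also be checked directly, as in the introduction.

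The main point requiring care is the claim that the layers other than $m$ contribute identically for every $i$, in particular the layer $m+1$ handled by neutrality above. This is what lets equality of the Shapley-Shubik indices collapse to equality of the single middle-layer count $w_i^{(m)}$. Everything after that is the same divisibility argument already used for unbiasedness.
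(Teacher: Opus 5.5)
Your existence direction is correct and matches the paper's. The non-existence direction, however, rests on a false premise. For a general monotone, neutral, resolute rule on $n=2m$ voters it is \emph{not} true that coalitions of size $k<m$ never win and coalitions of size $k>m+1$ always win. That dichotomy holds only for the special rules built from a middle-layer family in Lemma~\ref{lemma:comp_bal_iff_unbiased}. A dictatorship already violates it for every $n\geq 4$: the singleton $\{1\}$ wins and its complement loses. The proof of Lemma~\ref{lemma:banzhaf_eight} likewise spends most of its effort on rules with winning coalitions of size $2$ or $3$ when $n=8$.

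Without the premise, the counts $w_i^{(k)}$ for $k\neq m$ may depend on $i$. This includes $k=m+1$, which neutrality ties to layer $m-1$, not to layer $m$. So equal Shapley--Shubik indices do not force equal $w_i^{(m)}$, the middle layer need not be balanced, and your divisibility argument has nothing to act on. A sanity check confirms the reduction cannot be right as stated. Running the same argument with the Banzhaf formula of Proposition~\ref{prop:winning_coalition_properties}(3) would rule out Banzhaf-fair rules for $n=16,32,\dots$, contradicting Lemma~\ref{lem:exists-banzhaf-fair}.

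The missing idea is to avoid isolating any layer and instead use the arithmetic of the Shapley--Shubik weights for a single voter. Since $\varphi_i=1/n$ and $\frac{(k-1)!(n-k)!}{n!}=\frac{1}{n\binom{n-1}{k-1}}$, Proposition~\ref{prop:winning_coalition_properties}(2) gives $2\sum_{k=1}^n w_i^{(k)}/\binom{n-1}{k-1}=n+1$. If $n=2^r$, every $\binom{n-1}{k-1}$ is odd by Lucas's theorem. Multiplying through by $\prod_{k}\binom{n-1}{k-1}$ then makes the left side even and the right side odd, a contradiction. This is the paper's argument. It works for every voting rule, with no assumption about which small or large coalitions win. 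It relies on the normalization $\sum_i\varphi_i=1$, which has no Banzhaf counterpart.
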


Since any unbiased voting rule is also Shapley-Shubik fair, our result constructing unbiased voting rules already applies to yield the following first part of the proof.

\begin{lemma} (Existence of Shapley-Shubik-fair  Voting Rules)
    Let $n\geq 1$ be an integer.
    A Shapley-Shubik-fair voting rule $f:2^N\to\set{0,1}$ exists if $n \not \in \set{2,4,8,16,\dots}$. 
\end{lemma}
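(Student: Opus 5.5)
This lemma follows directly from results already established, by chaining two implications. Fix an integer $n\geq 1$ with $n\notin\set{2,4,8,16,\dots}$. First, Lemma~\ref{lemma:unbiased_existence} gives an unbiased voting rule $f:2^N\to\set{0,1}$. (For odd $n$ this is simple majority; for even $n$ that is not a power of two it comes from Theorem~\ref{thm:unbiased_even_nottwopow}, via the design-based complementary balanced family.) It then remains to show that this $f$ is Shapley-Shubik-fair.

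For that, I use Proposition~\ref{prop:winning_coalition_properties}. By its first item (equivalently Corollary~\ref{corr:unbiased_winning_coalition_count}), unbiasedness of $f$ means $w_i^{(k)}=w_j^{(k)}$ for all voters $i,j\in N$ and all $k\in\set{0,1,\dots,n}$. The second item of the same proposition expresses the Shapley-Shubik index purely in terms of these counts:
\[ \varphi_i = -1+2\cdot \sum_{k=1}^n \frac{(k-1)!\,(n-k)!}{n!}\cdot w_i^{(k)}. \]
The coefficients depend only on $k$ and $n$, not on $i$. So substituting $w_i^{(k)}=w_j^{(k)}$ termwise gives $\varphi_i=\varphi_j$ for every pair of voters, and $f$ is Shapley-Shubik-fair. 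As a sanity check, since $\sum_i\varphi_i=1$, each index equals $1/n$.

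No step here is a real obstacle. The substantive work, namely constructing unbiased rules for even non-powers of two and relating $\varphi_i$ to winning-coalition counts, is already done in Theorem~\ref{thm:unbiased_even_nottwopow} and Proposition~\ref{prop:winning_coalition_properties}. The only point needing care is that $\varphi_i$ is determined entirely by the vector $(w_i^{(k)})_k$, with no pivotality information entering separately. That was handled by the neutrality argument in the proof of the Shapley-Shubik formula, so I can invoke it as stated.
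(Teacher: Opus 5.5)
Your proposal is correct and matches the paper's proof, which likewise obtains an unbiased rule from Lemma~\ref{lemma:unbiased_existence} and concludes Shapley-Shubik-fairness from Proposition~\ref{prop:winning_coalition_properties}. Your termwise comparison of the $\varphi_i$ formula simply makes explicit the step the paper leaves implicit.
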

\begin{proof}
    This follows by Lemma~\ref{lemma:unbiased_existence} as unbiased voting rules are also Shapley-Shubik-fair (Proposition~\ref{prop:winning_coalition_properties}).
\end{proof}

The second part of the proof strengthens our impossibility result for unbiased voting rules.
We show that Shapley-Shubik-fair voting rules cannot exist when $n$ is an even power of 2.
Note that by Proposition~\ref{prop:winning_coalition_properties}, this gives an alternative proof of the non-existence of unbiased voting rules (Lemma~\ref{lemma:unbiased_nonexistence}).

\begin{lemma} (Non-Existence of Shapley-Shubik-fair Voting Rules)
\label{lem:shapley-non-existence}
    Let $n\geq 1$ be an integer.
    A Shapley-Shubik-fair voting rule $f:2^N\to\set{0,1}$ exists only if $n \not \in \set{2,4,8,16,\dots}$. 
\end{lemma}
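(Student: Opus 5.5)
The plan is a $2$-adic argument. Suppose $n=2^r$ with $r\geq 1$ and $f$ is a Shapley-Shubik-fair voting rule on $n$ voters. Since $\sum_{i\in N}\varphi_i=1$ and all indices are equal, fairness forces $\varphi_i=\frac1n$ for every $i\in N$. Fix any voter $i$ and apply item~2 of Proposition~\ref{prop:winning_coalition_properties}. The coefficient there simplifies as
\[
\frac{(k-1)!\,(n-k)!}{n!}=\frac{1}{n\binom{n-1}{k-1}} .
\]
Substituting this and multiplying through by $n$ turns the fairness condition into the identity
\[
\sum_{k=1}^n \frac{w_i^{(k)}}{\binom{n-1}{k-1}} \;=\; \frac{n+1}{2}.
\]

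Next I would examine the parity of the denominators. When $n=2^r$, the binary expansion of $n-1=2^r-1$ is all ones, of length $r$. By Lucas's Theorem (Theorem~\ref{thm:lucas}), every $\binom{n-1}{k-1}$ with $1\le k\le n$ is therefore odd: each binary digit $b$ of $k-1$ contributes a factor $\binom{1}{b}=1$ modulo $2$.

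The left-hand side is thus a finite sum of rationals of the form (integer)/(odd integer), since each $w_i^{(k)}$ is an integer. Such a sum can be written with an odd denominator, namely the product of the denominators. So it lies in the localization $\mathbb{Z}_{(2)}$: it equals $a/b$ with $a,b$ integers and $b$ odd.

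The right-hand side $\frac{n+1}{2}$ has odd numerator, because $n$ is even. Equating the two sides gives $2a=(n+1)b$, whose left side is even and right side is odd. This is a contradiction, so no Shapley-Shubik-fair rule exists for $n=2^r$.

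I do not expect a serious obstacle. The only points needing care are the simplification of the Shapley-Shubik coefficient into $1/(n\binom{n-1}{k-1})$ and the Lucas computation showing all $\binom{2^r-1}{j}$ are odd. As a sanity check, for $n=2$ this argument reproduces the footnote's observation that only dictatorships exist.
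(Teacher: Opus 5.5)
Your proposal is correct and follows the paper's proof essentially step for step. You derive $\sum_{k} w_i^{(k)}/\binom{n-1}{k-1} = \frac{n+1}{2}$ from item~2 of Proposition~\ref{prop:winning_coalition_properties}, and your odd-denominator argument is the paper's parity step, which clears denominators by multiplying through by $\prod_{k}\binom{n-1}{k-1}$, with Lucas's Theorem showing that every $\binom{2^r-1}{k-1}$ is odd.
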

\begin{proof}
     Let $n\geq 1$ be an integer such that there exists a Shapley-Shubik-fair voting rule on $n$ voters.
     Consider any voter $i\in N$. Since the Shapley-Shubik indices sum to 1, and $f$ is Shapley-Shubik-fair, it must hold that $\varphi_i = \frac{1}{n}$.
     By Proposition~\ref{prop:winning_coalition_properties}, this equals
     \[ \frac{1}{n} = \varphi_i = -1+2\cdot \sum_{k=1}^n \frac{ (k-1)! (n-k)!}{n!} \cdot w_i^{(k)} = -1+2\cdot \sum_{k=1}^n \frac{1}{n \cdot \binom{n-1}{k-1}} \cdot w_i^{(k)}, \]
     where $w_i^{(k)}$ denotes the number of winning coalitions of size $k$ including the voter $i$.
    Rearranging, we get
    \[ 2\cdot \sum_{k=1}^n \frac{w_i^{(k)}}{\binom{n-1}{k-1}} = n+1, \]
    and
    \[ 2\cdot \sum_{k=1}^n w_i^{(k)}\cdot \prod_{\substack{\ell=1\\ \ell\not=k}}^n\binom{n-1}{\ell-1}= (n+1) \cdot \prod_{k=1}^n\binom{n-1}{k-1}. \]
    This implies that 
    \[(n+1) \cdot \prod_{k=1}^n\binom{n-1}{k-1} \equiv 0 \pmod{2}.\]

    Suppose, for the sake of contradiction, that $n=2^r$ for some integer $r\geq 1$.
    Then, all the binomial coefficients $\binom{n-1}{k-1}$, for $k=1,2,\dots,n$, are odd; this follows by Lucas's Theorem (Theorem~\ref{thm:lucas}), as the binary representation of $n-1=2^r-1$ is $\underbrace{11\dots 1}_{r\text{ times }}$.
    This is a contradiction.   
\end{proof}

\subsection{Banzhaf-Fair Voting Rules}

Third, we consider Banzhaf-Fair voting rules.
In contrast to our results for Shapley-Shubik fairness, we show such rules exist for all $n$ except $2,4,$ and $8$.

\begin{theorem}
    Let $n\geq 1$ be an integer.
    A Banzhaf-fair voting rule $f:2^N\to\set{0,1}$ exists if and only if $n \not \in \set{2,4,8}$. 
\end{theorem}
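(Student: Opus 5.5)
The plan is to handle existence and non-existence separately, working throughout with the Banzhaf formula from Proposition~\ref{prop:winning_coalition_properties}. That formula says $f$ is Banzhaf-fair if and only if the total number $W_i=\sum_k w_i^{(k)}$ of winning coalitions containing $i$ is the same for every voter $i$. This is much weaker than unbiasedness, which requires $w_i^{(k)}=w_j^{(k)}$ level by level. That gap is exactly what should let powers of two $\ge 16$ through.

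\textbf{Existence.} For odd $n$, Majority is Banzhaf-fair. For even $n$ that is not a power of two, Lemma~\ref{lemma:unbiased_existence} gives an unbiased rule, which is Banzhaf-fair by Proposition~\ref{prop:winning_coalition_properties}. It remains to handle $n=2^r$ with $r\ge 4$; I would do this by direct construction.

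\begin{enumerate}
\item Write $n=2m$. Start from the template of Lemma~\ref{lemma:comp_bal_iff_unbiased}: every set of size $>m$ wins, and on level $m$ we choose one set from each complementary pair. Here balancing at level $m$ alone is impossible, since $\binom{n}{n/2}\equiv 2\pmod 4$ by Lemma~\ref{lemma:divisibility}.
\item So I would let winning status depart from the levels $m$ and $m\pm1$. Make a few sets of size $m-1$ winning and their complements of size $m+1$ losing, keeping monotonicity and neutrality. For example, make some $(m-1)$-set $A$ winning, make its complement losing, and fix up the middle level consistently around them.
\item Such a move changes the counts $W_i$ by different amounts for $i\in A$ and for $i\notin A$. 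This gives a correction term that the level-$m$ family alone cannot produce.
\item Then choose the level-$m$ family using cyclic orbits, as in Lemma~\ref{lemma:design_construction}, so that it is balanced up to a small, controlled defect. Cancel that defect with a small number of these low-level moves.
\end{enumerate}

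A cleaner alternative is a composition construction. One could combine a Banzhaf-fair rule on $8+1$ or $n-k$ voters with a gadget, tuning the gadget so that the Banzhaf indices of the two voter classes come out equal. Banzhaf indices multiply along compositions, which makes equalizing them a one-parameter equation. I would try the composition route first, because verification is then pure arithmetic.

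\textbf{Non-existence for $n\in\{2,4,8\}$.}
\begin{itemize}
\item The case $n=2$ is the footnote argument: the only rule is a Dictatorship.
\item For $n=4$, the neutral monotone rules are, up to relabelling, a Dictatorship, Majority of three voters with a dummy, and the rule of Figure~\ref{fig:n-is-4-guy}. None of these is fair.
\item For $n=8$, I would first look for a divisibility obstruction. Fairness forces $nW=\sum_{S:\,f(S)=1}|S|$. Also, pairing each winning $S$ with its losing complement gives $\sum_{S:\,f(S)=1}|S|=n2^{n-2}+\tfrac12\sum_{S:\,f(S)=1}(2|S|-n)$. The extra constraints come from Fourier/Chow-parameter parity of self-dual monotone functions.
\end{itemize}

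I expect $n=8$ to be the main obstacle. The analogous parity constraint evidently cannot rule out $n=16$, so pure divisibility probably does not suffice. My fallback would be an exhaustive, symmetry-reduced enumeration of self-dual monotone Boolean functions on $8$ variables, checking equality of the degree-one Fourier coefficients. I would organize it through maximal intersecting families so that it is a verifiable finite computation.
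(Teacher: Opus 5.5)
Your easy cases are fine: Majority for odd $n$, the unbiased rules of Lemma~\ref{lemma:unbiased_existence} for even $n$ that are not powers of two, and the Dictatorship argument for $n=2$. Your list for $n=4$ is also correct, though it needs a one-line justification: once no singleton wins, every $3$-set wins, and the rule is fixed by choosing one $2$-set from each of the three complementary pairs, which yields either a star or a triangle. But the two parts of the theorem that carry real content are not proved.

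\textbf{Existence for $n=2^r$, $r\ge 4$.} Here you offer only heuristics. The local-move plan is at least consistent with divisibility. Making an $(m-1)$-set winning lowers $\sum_{f(S)=1}(2|S|-n)$ by $4$, and fairness needs this sum to be divisible by $2n$. However, you never produce the cyclic level-$m$ family with a controlled defect. Nor do you show that pairwise intersecting $(m-1)$-sets, together with the level-$m$ sets they force, can cancel that defect. The composition route, in its natural reading, cannot start from a Banzhaf-fair base rule. Substituting a neutral gadget $g$ for voter $j^*$ of $f$ gives each gadget voter index $\beta_{j^*}(f)\,\beta_i(g)$. So $g$ must itself be Banzhaf-fair, and $f$ must give $j^*$ a strictly larger index than every other voter. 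In particular a $9$-voter Banzhaf-fair base is useless, and an $8$-voter gadget is excluded by the very statement you are proving. Since all indices are dyadic rationals, ``tuning'' is a discrete search with no guaranteed solution. The paper does not build these rules by hand either. It invokes Theorem~5 of \citet{Mey95}, which supplies regular maximal intersecting families for these $n$. Such a family is exactly the set of winning coalitions of a neutral monotone rule with all $W_i$ equal.

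\textbf{Non-existence for $n=8$.} You have no argument here, and divisibility alone cannot supply one. Write $t_k$ for the number of winning $k$-sets. Your own identities then give $W_i=81.5-(3t_1+2t_2+t_3)/4$. This is an integer, for instance, when $t_1=t_2=0$ and $t_3\in\{2,6\}$, which are exactly the configurations the paper must eliminate structurally. The missing idea is a case split on the size of the smallest winning coalition, using neutrality to pair counts:
\begin{itemize}
\item Size $1$ gives a Dictatorship.
\item Size $2$, say $\{1,2\}$, forces $w_1=w_2=32$, so every voter lies in $96$ winning coalitions. For voter $3$ this means every coalition containing $3$ and meeting $\{1,2\}$ wins, so $\{1,3\}$ and $\{2,3\}$ win. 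Likewise $\{1,4\}$ and $\{2,4\}$ win, which contradicts Lemma~\ref{lemma:wc_intersect}.
\item Size $3$: double counting $\sum_{f(S)=1}|S|$ gives $t=4s-58$ with $s\le 16$. This leaves $(s,t)\in\{(15,2),(16,6)\}$, and each case is refuted by explicit counting.
\item If nothing of size $\le 3$ wins, fairness collapses to equal $w_i^{(4)}$, i.e.\ unbiasedness. That is impossible since $4\nmid\binom{8}{4}$.
\end{itemize}
An exhaustive search over self-dual monotone functions on $8$ variables that you have not run is a plan for a computation, not a proof.
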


The proof of the above theorem is carried out in two parts, as follows:

\begin{lemma} (Existence of Banzhaf-fair Voting Rules)
\label{lem:exists-banzhaf-fair}
    Let $n\geq 1$ be an integer.
    A Banzhaf-Fair voting rule $f:2^N\to\set{0,1}$ exists if $n \not \in \set{2,4,8}$.
\end{lemma}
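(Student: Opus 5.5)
For $n$ not a power of $2$ the statement already follows: Lemma~\ref{lemma:unbiased_existence} gives an unbiased rule, and unbiased rules are Banzhaf-fair by Proposition~\ref{prop:winning_coalition_properties}. So the remaining cases are $n=2^r$ with $r\ge 4$. The approach is a composition of voting rules, using the fact that Banzhaf indices multiply under composition. I have not worked out an explicit construction for any $n=2^r$, so the decisive step below is still open.

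\textbf{Composition lemma.} Let $g$ be a voting rule on $k$ ``super-voters''. For $j=1,\dots,k$ let $h_j$ be a voting rule on a block $N_j$, where the blocks partition $N$. Define $f(S)=g\bigl(\set{j : h_j(S\cap N_j)=1}\bigr)$.
\begin{itemize}
\item Monotonicity and neutrality of $f$ follow at once from those of $g$ and the $h_j$.
\item Each $h_j$ is neutral, so under a uniformly random $S$ the bits $h_j(S\cap N_j)$ are independent uniform bits.
\item Voter $i\in N_j$ is pivotal for $f$ exactly when $i$ is pivotal for $h_j$ and super-voter $j$ is pivotal for $g$. These two events are independent.
\end{itemize}
Hence $\beta_i(f)=\beta_j(g)\cdot\beta_i(h_j)$. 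So $f$ is Banzhaf-fair as soon as each $h_j$ is Banzhaf-fair with value $b_j$ and $\beta_j(g)\,b_j$ is the same for all $j$.

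\textbf{Building blocks.} Every $h_j$ will be a rule we already control: majority on an odd number $t$ of voters, or an unbiased rule from Theorem~\ref{thm:unbiased_even_nottwopow}. For these, Proposition~\ref{prop:winning_coalition_properties} lets me compute $b_j$ from winning-coalition counts. For majority, $b=\binom{t-1}{(t-1)/2}/2^{t-1}$. For the rule of Lemma~\ref{lemma:comp_bal_iff_unbiased}, only the middle layer is nonstandard, and the count is explicit. The outer rule $g$ will be a small weighted majority game, possibly non-fair, whose Banzhaf indices are computed by hand.

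\textbf{Main obstacle.} The hard step is solving $\beta_j(g)\,b_j=\text{const}$ with block sizes summing to exactly $2^r$. Equal blocks under a symmetric $g$ fail: equal sizes force $2^r=k\cdot s$, hence an even number of blocks, and then $g$ itself is a non-fair even case. So the blocks must have different sizes, compensated by an asymmetric $g$.
\begin{itemize}
\item For $n=16$, the first thing I would try is a weighted $g$ on three or five super-voters, with blocks of sizes such as $\{5,5,6\}$ or $\{3,3,3,7\}$. I would search, by hand or by computer, for a combination whose Banzhaf ratios match exactly.
\item For larger $2^r$, I would induct: replace one block of a working construction by a fair block of double size with matching Banzhaf value, or grow an odd-majority block by a suitable amount.
\end{itemize}
If the product equation has no solution of this form, the fallback is local surgery on a rule of the form in Lemma~\ref{lemma:comp_bal_iff_unbiased}. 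Banzhaf-fairness only needs $\sum_k w_i^{(k)}$ to be constant, not each layer. So I would fix the unavoidable $\pm$ imbalance in the middle layer, which arises because $\binom{n}{n/2}\equiv 2\pmod 4$, by swapping the winning status of a few complementary pairs in layers $m\pm1$. Those swaps would have to be chosen so that monotonicity is preserved.
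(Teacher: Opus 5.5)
Your treatment of $n\notin\{2,4,8,16,\dots\}$ via Lemma~\ref{lemma:unbiased_existence} and Proposition~\ref{prop:winning_coalition_properties} is correct and matches the paper. (That lemma also covers $n=1=2^0$, which your split ``not a power of $2$'' / ``$2^r$ with $r\ge 4$'' skips.) But $n=2^r\ge 16$ is the only part of the lemma not already proved earlier, and your proposal does not prove it. The identity $\beta_i(f)=\beta_j(g)\,\beta_i(h_j)$ is right, but you exhibit no outer rule and blocks with $\beta_j(g)\,b_j$ constant and $\sum_j t_j=2^r$. Moreover, the starting points you name provably fail for $n=16$:
\begin{itemize}
\item A Banzhaf-fair rule on $t\le 5$ voters is majority. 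For $t=5$, excluding dictators, fairness gives $5W=55-x$, where $x\le 4$ counts winning pairs, so $x=0$. Hence $b\in\{1,\tfrac12,\tfrac38\}$ for $t\in\{1,3,5\}$.
\item For $t\ge 6$, the average Banzhaf index of any voting rule on $t$ voters is at most $\binom{t-1}{\lfloor (t-1)/2\rfloor}/2^{t-1}\le\tfrac{5}{16}$.
\item With three super-voters, $g$ must be majority (a dictator would create dummies). So all three block values coincide, forcing three equal sizes in $\{1,3,5\}$ or three sizes $\ge 6$. Neither sums to $16$.
\item With four super-voters, the only dummy-free $g$ (up to relabeling) is the rule of \autoref{fig:n-is-4-guy}, with indices $(\tfrac34,\tfrac14,\tfrac14,\tfrac14)$. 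So the three light blocks need value $3b_a$. Since Banzhaf indices are dyadic, this forces three blocks of size $5$ and $t_a=1$, but then $b_a=1\ne\tfrac18$.
\end{itemize}
So $\{5,5,6\}$ and $\{3,3,3,7\}$ fail, and indeed so does every composition with three or four super-voters. Your induction step inherits the same unproved value-matching problem, and enlarging one block from $t$ to $2t$ voters adds only $t$ voters, so it does not take $2^r$ to $2^{r+1}$. The surgery fallback is a plan, not a construction.

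The missing ingredient is an actual object on $2^r$ voters. The paper obtains it by a reformulation plus a citation. By Proposition~\ref{prop:winning_coalition_properties}(3), $f$ is Banzhaf-fair exactly when every voter lies in the same number $\sum_k w_i^{(k)}$ of winning coalitions. The winning coalitions of a monotone, neutral, resolute rule are precisely a maximal intersecting family on $N$: they intersect by Lemma~\ref{lemma:wc_intersect}, and the family is maximal because exactly one of $S$ and $N\setminus S$ wins. Hence Banzhaf-fair rules are the same as regular maximal intersecting families, and these exist for $n=16,32,64,\dots$ by Theorem~5 of Meyerowitz (1995). Your surgery idea targets the right invariant. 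To make it a proof, you would have to exhibit such a family explicitly (in effect reproving that theorem) or cite it.
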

\begin{proof}
    For $n \not \in \set{2,4,8,16,\dots}$, this follows by Lemma~\ref{lemma:unbiased_existence} as unbiased voting rules are also Banzhaf-fair (Proposition~\ref{prop:winning_coalition_properties}).

    For $n\in \set{16, 32, 64, \dots}$, the result follows by by a construction of regular maximal intersecting families, given in Theorem 5 of~\citet{Mey95}.\footnote{
      Maximal intersecting families---i.e., maximal families $\mc F$ of subsets of $\set{1,2,\dots,n}$ that pairwise intersect---have received considerable attention in the combinatorics literature.
      One can show that (1) all such $\mc F$ are upwards-closed, and (2) all such $\mc F$ contain exactly one of $S$ or $N\setminus S$ for each $S\subseteq N$ (see, e.g., \citealt[Lemma 2.1, Corollary 2.2]{Mey95}). 
      Hence, such families yield the set of winning coalitions of a monotone, neutral, and resolute voting rule.
      Regular families are those such that each $i\in N$ is contained in an equal number of elements of $\mc F$.
      Hence, by Proposition~\ref{prop:winning_coalition_properties}, regular maximal intersecting families yield Banzhaf-fair voting rules.
    }
\end{proof}

\subsubsection{Non-Existence of Banzhaf-fair Voting Rules}

We prove that no Banzhaf-Fair voting rule $f:2^N\to\set{0,1}$ exists for $n=2,4,8$.
The proof proceeds via case analysis.

\begin{lemma}\label{lemma:banzhaf_two}
    There is no Banzhaf-fair voting rule on $n=2$ voters.
\end{lemma}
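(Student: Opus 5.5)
The plan is to show that on $n=2$ voters every voting rule (monotone, neutral, resolute) is a dictatorship, and then to compute the Banzhaf indices of a dictatorship directly.

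\textbf{Step 1: classify the voting rules on $N=\set{1,2}$.} By neutrality, $f(\emptyset)=1-f(N)$. Monotonicity gives $f(\emptyset)\le f(N)$, so $f(N)=1$ and $f(\emptyset)=0$. Neutrality also gives $f(\set{2})=1-f(\set{1})$, so exactly one of the singletons is winning. Without loss of generality (relabelling voters), $f(\set{1})=1$ and $f(\set{2})=0$. Hence $f(S)=1$ if and only if $1\in S$, i.e., $f$ is the dictatorship of voter $1$. This is the same argument as in the footnote in the introduction.

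\textbf{Step 2: compute the Banzhaf indices.} From the definition, $\beta_1=\Ps{S\sim 2^{\set{2}}}{f(S\cup\set{1})=1\land f(S)=0}$. For both choices $S=\emptyset$ and $S=\set{2}$, adding voter $1$ turns a losing coalition into a winning one, so $\beta_1=1$. On the other hand, $\beta_2=0$, since $f(S\cup\set{2})=f(S)$ for every $S\subseteq\set{1}$.

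As a cross-check via Proposition~\ref{prop:winning_coalition_properties}: the winning coalitions are $\set{1}$ and $\set{1,2}$. So $w_1^{(1)}=w_1^{(2)}=1$, while $w_2^{(1)}=0$ and $w_2^{(2)}=1$. This gives $\beta_1=-1+\frac{1}{2^{0}}\cdot 2=1$ and $\beta_2=-1+1=0$, agreeing with the direct computation.

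\textbf{Conclusion.} Since $\beta_1\neq\beta_2$, no voting rule on $n=2$ voters is Banzhaf-fair. There is no real obstacle here; the only substantive step is the classification in Step 1, which follows immediately from neutrality and monotonicity.
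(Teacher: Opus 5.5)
Your proposal is correct and follows the paper's proof: neutrality and monotonicity force $f(\emptyset)=0$, $f(\{1,2\})=1$ and $f(\{1\})=1-f(\{2\})$, so every rule on two voters is a dictatorship, which is not Banzhaf-fair. You additionally carry out the verification the paper leaves as ``easily verified'' ($\beta_1=1\neq 0=\beta_2$), and your cross-check against Proposition~\ref{prop:winning_coalition_properties} agrees.
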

\begin{proof}
    Let $f:2^N\to \set{0,1}$ be any voting rule with $n=2$.
    We know that \[ f(\emptyset)=0,\quad f (\set{1,2})=1, \quad f(\set{1}) = 1-f(\set{2}).\]
    Thus, the only possible voting rules are dictators, which are easily verified to not be Banzhaf-fair.
\end{proof}

\begin{lemma}\label{lemma:banzhaf_four}
    There is no Banzhaf-fair voting rule on $n=4$ voters.
\end{lemma}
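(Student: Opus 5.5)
The plan is to use the counting formula from Proposition~\ref{prop:winning_coalition_properties}(3), i.e.\ $\beta_i = -1 + \frac{1}{2^{n-2}}\sum_{k=1}^n w_i^{(k)}$. Writing $W_i=\sum_k w_i^{(k)}$ for the number of winning coalitions containing voter $i$, Banzhaf-fairness on $n=4$ voters is then equivalent to $W_1=W_2=W_3=W_4$. By double counting,
\[\sum_{i\in N} W_i = \sum_{S:\,f(S)=1} |S|.\]
So a fair rule forces $\sum_{S:\,f(S)=1}|S|$ to be divisible by $4$. I will show this sum is never divisible by $4$ unless $f$ is a dictatorship, and handle dictatorships directly.

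\textbf{Step 1 (the dictator case).} Suppose some singleton $\{j\}$ is winning. By monotonicity every set containing $j$ is winning. By Lemma~\ref{lemma:wc_intersect}, no set avoiding $j$ is winning. So $f$ is the dictatorship of $j$. Then $W_j=8$, while $W_i=4$ for $i\neq j$, so $f$ is not fair.

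\textbf{Step 2 (the structure when no singleton wins).} Since $f$ is resolute, exactly one of each complementary pair $\{S, N\setminus S\}$ is winning.
\begin{itemize}
\item Sizes $0$ and $4$: $N$ is winning and $\emptyset$ is not.
\item Sizes $1$ and $3$: no singleton is winning, so all four triples are winning.
\item Size $2$: the $6$ pairs split into $3$ complementary pairs, so exactly $3$ two-element sets are winning.
\end{itemize}
Hence the winning coalitions are $N$, the $4$ triples and exactly $3$ pairs. Their total size is $4 + 4\cdot 3 + 3\cdot 2 = 22$.

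\textbf{Step 3 (conclusion).} Since $22$ is not divisible by $4$, the $W_i$ cannot all be equal. Therefore $f$ is not Banzhaf-fair.

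I do not expect a real obstacle here. The only point needing care is checking that Proposition~\ref{prop:winning_coalition_properties}(3) reduces fairness exactly to equality of the $W_i$, which holds because the normalisation $\frac{1}{2^{n-2}}$ and the shift $-1$ are the same for every voter.
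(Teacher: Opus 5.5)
Your proof is correct, and it takes a more direct route than the paper's. You handle the dictator case the same way the paper does. In the remaining case, the paper observes that every level other than $|S|=2$ is forced, so Banzhaf-fairness makes $w_i^{(2)}$ independent of $i$ and hence makes $f$ unbiased (Proposition~\ref{prop:winning_coalition_properties}). It then cites Lemma~\ref{lemma:unbiased_nonexistence}, which goes through complementary balanced families (Lemma~\ref{lemma:comp_bal_iff_unbiased}) and the fact that $4\nmid\binom{4}{2}$. You instead double count $\sum_{S:\,f(S)=1}|S|$ directly. This is self-contained and needs nothing beyond item~3 of Proposition~\ref{prop:winning_coalition_properties}.

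At heart the obstruction is the same. In your total of $22$, the forced levels contribute $4+12=16\equiv 0\pmod 4$. The bad residue comes entirely from the middle layer, $2\cdot 3=6$, which is exactly the quantity $\sum_i|\mathcal T_i|=\frac{n}{2}\cdot\frac{1}{2}\binom{n}{n/2}$ that the paper needs to be divisible by $n=4$. Your version is the same total-size count the paper itself uses in the size-3 case of Lemma~\ref{lemma:banzhaf_eight} (Equation~\eqref{eqn:coal_count_1}). The paper's version gains reuse of an already-proved lemma. It also makes explicit that, when only the middle level is free, Banzhaf-fairness coincides with unbiasedness; the paper applies the same reduction in the final case of Lemma~\ref{lemma:banzhaf_eight}.
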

\begin{proof}
    Let $f:2^N\to \set{0,1}$ be any voting rule with $n=4$.
    We know $f(\emptyset)=0,\ f(N)=1$.
    
    Suppose there exists $S\subseteq N, |S|=1$ with $f(S)=1$; without loss of generality $S=\set{1}$. Then, since $f$ is monotone and neutral, it must equal the dictator function with respect to voter 1, and this is easily verified to not be Banzhaf-fair.

    Suppose $f(S)=0$ for each $S\subseteq N, |S|=1$.
    Then, by neutrality, $f(S)=1$ for each $S\subseteq N, |S|=3$.
    By Proposition~\ref{prop:winning_coalition_properties}, any such function that is Banzhaf-fair must also be unbiased. 
    Since there exists no unbiased voting rule on $n=4$ voters (Lemma~\ref{lemma:unbiased_nonexistence}), there is no such function which is Banzhaf-fair.
\end{proof}

\begin{lemma}\label{lemma:banzhaf_eight}
    There is no Banzhaf-fair voting rule on $n=8$ voters.
\end{lemma}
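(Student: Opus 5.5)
The plan is to reduce Banzhaf-fairness to a counting condition and then do a structured case analysis on the small winning coalitions. By Proposition~\ref{prop:winning_coalition_properties}, $f$ is Banzhaf-fair iff $W_i:=\sum_k w_i^{(k)}$, the number of winning coalitions containing $i$, is the same for all $i$.

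\textbf{Setting up the comparison.} Compare $f$ with a ``baseline'' rule $f_0$. Let $\mc T$ be the size-$4$ winning coalitions of $f$. By neutrality $\mc T$ is complementary, so $|\mc T|=35$. Let $f_0$ declare a set winning iff it has size $\ge 5$ or lies in $\mc T$. Every other $f$ with this $\mc T$ arises from $f_0$ by \emph{swaps}. A swap makes a set $S$ with $|S|\le 3$ winning and its complement losing. Let $\mc L$ be the family of small winning sets of $f$ (those with $|S|\le 3$). A swap of $S$ changes $W_i$ by $+1$ if $i\in S$ and by $-1$ otherwise. Hence
\[ W_i = c + |\mc T_i| + \sum_{S\in\mc L}\bigl(2\cdot\ind[i\in S]-1\bigr), \]
where $c=\sum_{k\ge5}\binom{7}{k-1}$ and $\mc T_i=\set{T\in\mc T: T\ni i}$.

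\textbf{A necessary divisibility condition.} Sum the display over $i$ and use $\sum_i|\mc T_i|=140$. If all $W_i$ equal some $w$, then $8w=8c+140+\sum_{S\in\mc L}(2|S|-8)$. This forces $\sum_{S\in\mc L}(|S|-4)\equiv 2\pmod 4$. In particular $\mc L\neq\emptyset$, which recovers the unbiased obstruction of Lemma~\ref{lemma:unbiased_nonexistence} for $n=8$.

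\textbf{Case analysis on $\mc L$.} Any size-$1$ winning set makes $f$ a dictatorship, which is not fair. So every $S\in\mc L$ has size $2$ or $3$. By Lemma~\ref{lemma:wc_intersect}, $\mc L\cup\mc T$ is pairwise intersecting.
\begin{enumerate}
\item \emph{Some size-$2$ set $\set{a,b}$ is winning.} Then every winning set meets $\set{a,b}$. I expect this to push $W_a+W_b$ far above the other $W_i$: each winning coalition avoiding $a$ must contain $b$. I would write $W_a$ and $W_b$ versus $W_c$ via the pair/complement correspondence and derive a strict inequality.
\item \emph{All sets in $\mc L$ have size $3$.} The congruence becomes $|\mc L|\equiv 2\pmod 4$, and $\mc L$ is an intersecting family of triples on $8$ points. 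Equalizing the $W_i$ requires
\[ |\mc T_i| + 2\,|\set{S\in\mc L: S\ni i}| \text{ to be constant in } i. \]
Since $\sum_i$ of the left side is $140+6|\mc L|$, this constant equals $(140+6|\mc L|)/8$.
\end{enumerate}

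\textbf{Main obstacle.} Case~2 is the hard part. Intersecting triple families on $8$ points are either stars (all containing a common point) or lie inside a small configuration: a triangle-type family or subfamilies of a Fano plane on $7$ points. The size-$4$ family $\mc T$ must also meet every triple in $\mc L$. For a star centered at $x$, this forces every size-$4$ set disjoint from some triple to be losing, which pins down much of $\mc T$. I would then compare $|\mc T_x|$ with $|\mc T_y|$ for $y\neq x$ and show the imbalance cannot be cancelled. For the Fano-type families the eighth point is special, and I would compare its count with the rest. If the hand analysis becomes unwieldy, the fallback is an exhaustive computer check. Up to symmetry, the maximal intersecting families on $8$ points are few enough to enumerate, and I would verify that none is regular.
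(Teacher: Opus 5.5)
Your reformulation is sound. You have $W_i=64+|\mathcal T_i|+2|\mathcal L_i|-|\mathcal L|$, where $\mathcal L_i$ is the set of members of $\mathcal L$ containing $i$. The congruence $\sum_{S\in\mathcal L}(|S|-4)\equiv 2\pmod 4$ is correct, and it does dispose of $\mathcal L=\emptyset$.

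\textbf{Case 2 is a genuine gap.} The case where all small winning coalitions are triples is the heart of the lemma, and you only give a plan for it. That plan rests on a classification that is false. An intersecting family of triples on $8$ points need not be a star, nor lie inside a triangle-type family or a Fano plane. Two counterexamples are all ten triples of $\{1,\dots,5\}$, and the Hilton--Milner family $\{\{2,3,4\}\}\cup\{\{1,a,b\}:\{a,b\}\cap\{2,3,4\}\neq\emptyset\}$. An unspecified computer search does not fill this gap.

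\textbf{Case 1 is only sketched, and the expected mechanism is off.} Nothing is ``far above''. Monotonicity, neutrality and fairness force $W_a=W_b=96$, and every other voter has $W_c\le 3\cdot 32=96$. The contradiction comes from the equality case. $W_c=96$ forces $\{a,c\}$ and $\{b,c\}$ to be winning, and doing this for two voters $c\neq d$ gives the disjoint winning coalitions $\{a,c\}$ and $\{b,d\}$.

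\textbf{What Case 2 is missing is an upper bound on $t=|\mathcal L|$.} Your congruence only yields $t\equiv 2\pmod 4$. The paper obtains the bound locally, as follows.
\begin{enumerate}
\item Fix a winning triple $\{1,2,3\}$. For $S\subseteq\{1,2,3\}$, let $w_S$ be the number of $T\subseteq\{4,\dots,8\}$ with $S\cup T$ winning.
\item Neutrality gives $w_1+w_{23}=w_2+w_{13}=w_3+w_{12}=32$.
\item Fairness then forces $w_1=w_2=w_3=s$ and $w_{12}=w_{13}=w_{23}=32-s$.
\item Monotonicity ($w_1\le w_{12}$) gives $s\le 16$, so every voter lies in $96-s\ge 80$ winning coalitions.
\item Your global count says this common value is $(652-2t)/8$. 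Hence $t=4s-58$, and only $(s,t)=(16,6)$ or $(15,2)$ survive.
\end{enumerate}
Each surviving case is then short. For $s=16$, the six completion families $\mathcal W_S$ with $|S|\in\{1,2\}$ coincide. Let $k$ be the number of sets in this common family containing voter $4$. Then the count for voter $4$ reads $6k+16=80$, which is impossible. For $s=15$, the second winning triple is, up to symmetry, $\{1,2,4\}$ or $\{1,4,5\}$, and each contradicts monotonicity after a direct count.

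Without such a bound, or a correct structural argument replacing it, your Case 2 does not close. So the proposal does not yet prove the lemma.
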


\begin{proof}
Let $f:2^N\to \set{0,1}$ be a voting rule on $n=8$ voters.
We know $f(\emptyset) = 0$ and $f(N)=1$.
Suppose, for the sake of contradiction, that $f$ is Banzhaf-fair.
Consider the following cases:
\paragraph{Winning Coalition of Size 1:}  Suppose there exists $S\subseteq N, |S|=1$ with $f(S)=1$; without loss of generality $S=\set{1}$. Then, since $f$ is monotone and neutral, it must equal the dictator function with respect to voter 1, and this is easily verified to not be Banzhaf-fair.

\paragraph{Winning Coalition of Size 2:}  Now, suppose there exists no winning coalition of size 1, and that there exists $S\subseteq N, |S|=2$ with $f(S)=1$; without loss of generality $S=\set{1,2}$.

For each $S\subseteq \set{1,2}$, define $w_S$ to be the number of winning coalitions containing exactly voters in $S$ among the first two voters; i.e.,
\[ w_S = \abs{\set{T\subseteq \set{3,4,\dots,8} : f(S\cup T) =1}}. \]
For simplicity of notation, we use $w_1,w_2,w_{12}$ to denote $w_{\set{1}}, w_{\set{2}}, w_{\set{1,2}}$ respectively.
Note that $w_{12} = 64$ by monotonicity, and $w_{\emptyset}=0$ as winning coalitions are intersecting (Lemma~\ref{lemma:wc_intersect}).

Observe that for any $T\subseteq \set{3,4,\dots,8}$, we have $N\setminus (T\cup\set{1}) = (\set{3,4,\dots,8}\setminus T)\cup\set{2}$.
Since neutrality of $f$ implies that exactly one of $U$ or $N\setminus U$ is winning for each $U$, we have $w_1+w_2 = 64$.
Now, the total number of winning coalitions including voter 1 (resp. voter 2) equals $w_{12}+w_1$ (resp. $w_{12}+w_2$).
Since $f$ is Banzhaf-fair, these must be equal, and hence $w_1=w_2=32$.
In particular, this also implies that each voter $i\in N$ is a part of $w_{12}+w_1=64+32=96$ winning coalitions.

Now, consider all winning coalitions containing voter 3.
Since all winning coalitions are intersecting (Lemma~\ref{lemma:wc_intersect}), every such coalition is of the form $S\cup \set{3}\cup T$, where $\emptyset\not= S\subseteq \set{1,2}$ and $T\subseteq \set{4,5,\dots,8}$.
The total number of such coalitions is $96=3\cdot 32$, and since voter 3 occurs in 96 winning coalitions, all of these must be winning.
In particular, we have $f(\set{1,3})=f(\set{2,3})=1$.

A similar reasoning for voter 4 shows that $f(\set{1,4})=f(\set{2,4})=1$.
This is a contradiction since $\set{1,3}$ and $\set{2,4}$ are disjoint.

\paragraph{Winning Coalition of Size 3:} Now, suppose there exists no winning coalition of sizes 1,2, and that there exists $S\subseteq N, |S|=3$ with $f(S)=1$; without loss of generality $S=\set{1,2,3}$.

For each $S\subseteq \set{1,2,3}$, define $\mc W_S$ to be the set of voters in $\set{4,5,\dots,8}$ who form a winning coalition alongside $S$,
and let $w_S$ be the number of such coalitions; i.e.,
\[\mc W_S = \set{T\subseteq \set{4,5,\dots,8} : f(S\cup T) =1}, \quad w_S=\abs{\mc W_S}.\]
For simplicity of notation, we use $w_1$ to denote $w_{\set{1}}$, and $w_{123}$ to denote $w_{\set{1,2,3}}$, and so on.
Note that $w_{123} = 32$ by monotonicity, and $w_{\emptyset}=0$ as winning coalitions are intersecting (Lemma~\ref{lemma:wc_intersect}).
Now, observe the following:
\begin{enumerate}
	\item By the neutrality of $f$, we have the following relations \[ w_1+w_{23} = w_2+w_{13} = w_3+w_{12}=32. \]
	For example, $w_1+w_{23}=32$ follows by observing that for any $T\subseteq \set{4,5,\dots,8}$, we have $N\setminus (T\cup\set{1}) = (\set{4,5,\dots,8}\setminus T)\cup\set{2,3}$.
	
	\item Since $f$ is Banzhaf-fair, we also have the following relations \[ w_1+w_{12}+w_{13} = w_2+w_{12}+w_{23}=w_3+w_{13}+w_{23}. \]
	This follows by observing that the number of winning coalitions containing voter 1 is $w_1+w_{12}+w_{13}+w_{123}$ (and similarly for other voters).
\end{enumerate}

Solving the above system of linear equations, we get
\[
	w_1=w_2=w_3=s, \quad w_{12}=w_{13}=w_{23} = 32-s, \quad \text{for some integer }0\leq s\leq 16,
\]
where we used that by monotonicity, $w_1\leq w_{12}$, and hence $s\leq 16$.
Moreover, each voter $i\in N$ occurs in $w_1+w_{12}+w_{13}+w_{123} = 96-s$ winning coalitions.

Next, we measures the sum of the sizes of all winning coalitions.
Since each voter occurs in $96-s$ winning coalitions, this equals
\begin{equation}\label{eqn:coal_count_1}
	\sum_{S\subseteq N: f(S)=1}|S| = 8\cdot(96-s) = 768-8s.
\end{equation}
On the other hand, we count this sum by summing over different set sizes $|S|$.
Define $t$ to be the number of winning coalitions of size 3, i.e., $t= \abs{\set{S\subseteq N: f(S)=1, |S|=3}}$. Then,
\begin{enumerate}
	\item We know $f(S)=0$ for $|S|=0,1,2$, and hence, the contribution from such sets is 0.
	\item By neutrality, we have that $f(S)=1$ whenever $|S|=6,7,8$. The contribution from such sets is $6\cdot\binom{8}{6} + 7\cdot\binom{8}{7}+ 8\cdot\binom{8}{8} = 232$.
	\item By neutrality, $f(S)=1$ for exactly $\frac{1}{2}\binom{8}{4}=35$ sets of size $|S|=4$. The contribution from such sets is $4\cdot 35 = 140$.
	\item The contribution from sets of size 3 is $3t$.
	\item By neutrality, the number of losing coalitions of size 5 is $t$, and hence the number of winning coalitions of size 5 is $\binom{8}{5}-t = 56-t$. The contribution from such sets is $5\cdot (56-t) = 280-t$.
\end{enumerate}
Summing all of the above, we get that 
\[ \sum_{S\subseteq N: f(S)=1}|S| = 0+232+140+3t+280-5t=652-2t. \]
Combining with Equation~\eqref{eqn:coal_count_1}, we get $768-8s = 652-2t$, or equivalently, $t = 4s-58.$
Since $t\geq 0$, we get $s\geq 15$.
Recall that $s\leq 16$; hence, the only possible combinations of $s,t$ are $s=15,\ t=2$ and $s=16,\ t=6$.
We consider these two cases one by one.

\begin{itemize}
	\item Suppose $s=16,\ t=6$.
		In this case, $w_1=w_{12}=16$, and since $\mc W_1\subseteq \mc W_{12}$, we have $\mc W_1=\mc W_{12}$. Using other such equalities, we get \[\mc W_1=\mc W_2=\mc W_3=\mc W_{12}=\mc W_{23}=\mc W_{13} := \mc U \subseteq \set{4,5,\dots,8}.\]
		Now, consider voter 4, and let $k = \abs{\set{T\in \mc U: 4\in T}}$ be the number of sets in $\mc U$ containing voter 4.
		The total number of winning coalitions containing voter 4 is
		\[ \sum_{S\subseteq \set{1,2,3}}\abs{\set{T\subseteq \set{4,5,\dots,8}: 4\in T, f(S\cup T)=1}} = 6k+16,\]
		because the sum is $0$ for $\abs{S}=0$, and $k$ for $|S|=1$ or $2$, and 16 for $|S|=3$.
		Since this equals $96-s=80$, we have $6k+16=80$, and $6k=64$, which is not possible for an integer $k$. This is a contradiction.
	\item Suppose $s=15,\ t=2$. 	
	Consider the two winning coalitions of size 3, one of them is $\set{1,2,3}$, and let the other one be $A$. Since winning coalitions are intersecting, we may assume without loss of generality that $A = \set{1,2,4}$ or $A = \set{1,4,5}$.
		We do a case analysis:
	
	\begin{itemize}
		\item Suppose $A = \set{1,2,4}$. Consider the set $\mc W_{12}$, of size $32-s=17$. Observe that $\set{4}\in \mc W_{12}$, and hence by monotonicity, every $T\subseteq \set{4,5,\dots,8}$ with $4\in T$ is such that $T\in \mc W_{12}$.
			Now, $\mc W_1\subseteq \mc W_{12}$ is of size $s=15$, and hence excludes exactly 2 elements from $\mc W_{12}$.
			Since $\set{4}, \set{4,5}, \set{4,6} \in \mc W_{12}$, at least one of them must lie in $\mc W_1$.
			Thus, at least one of $\set{1,4}, \set{1,4,5}, \set{1,4,6}$ is a winning coalition, which is a contradiction to the assumption that the only winning coalitions of size at most 3 are $\set{1,2,3}$ and $\set{1,2,4}$.		
		\item Suppose $A = \set{1,4,5}$.			
		Consider the set $\mc W_2$ of size $s=15$. This cannot have any set of size $0,1$, or $2$, as that would lead to a new winning coalition of size at most $3$. Also, $\set{6,7,8}\not\in \mc W_2$ or else $\set{2,6,7,8}$ is a winning coalition disjoint from $\set{1,4,5}$. We remain with exactly 15 subsets of $\set{4,5,\dots,8}$, and hence \[\mc W_2 = \set{T\subseteq \set{4,5,\dots,8}: |T|\geq 3} \setminus \set{6,7,8}.\]
		Similarly, considering voter 3, we also get	
		\[\mc W_3 = \set{T\subseteq \set{4,5,\dots,8}: |T|\geq 3} \setminus \set{6,7,8}.\]
	
		Now, we count the number of winning coalitions containing voter 4.
		For each $S\subseteq \set{1,2,3}$, define
		\[ \mc V_{S} = \set{T\in \mc W_S: 4\in T},\quad v_S=|\mc V_S|. \]
		Observe that $v_{\emptyset}=0,\ v_{123}=16$.
		Also, for any set $S\subseteq \set{1,2,3}$ of size $|S|=1$, and $S' = \set{1,2,3}\setminus S$, we have by neutrality that
		\[ v_S = w_{S}-(2^4-v_{S'}) = 15-2^4+v_{S'}= v_{S'}-1. \]
		Now, since we know exactly what $\mc W_2,\mc W_3$ are, direct counting shows $v_2=v_3=11$, and hence $v_{12}=v_{13}=12$.
		Also, $v_{23}=v_1+1$, and so counting the total number of winning coalitions containing voter 4, we get
		\[ 81=96-s = 0 + v_1+ 11 + 11 + 12+12+(v_1+1)+16 = 2v_1+63, \]
		and hence $v_1=9$ and $v_{23}=10$.
		But, by monotonicity, $11=v_2\leq v_{23} = 10$, which is a contradiction.
	\end{itemize}
\end{itemize}

\paragraph{No Winning Coalition of Size 1,2,3:} 
Suppose that $f(S)=0$ for every $S\subseteq N$ with $|S|\leq 3$.
By neutrality, $f(S)=1$ for every $S\subseteq N$ with $|S|\geq 5$.
By Proposition~\ref{prop:winning_coalition_properties}, since $f$ is Banzhaf-fair, it must also be unbiased. 
Since there exists no unbiased voting rule on $n=8$ voters (Lemma~\ref{lemma:unbiased_nonexistence}), we get a contradiction.
\end{proof}

{
\printbibliography
}

\appendix

\section{Lucas's Theorem}

\begin{theorem}\label{thm:lucas}
    (Lucas's Theorem modulo 2)
    Let $m,n\geq 1,\ n\leq m$ be integers such that their binary representations are given by
    \[ m = a_r 2^r + a_{r-1}2^{r-1}+\dots+a_12 + a_0,\]
    \[ n = b_r 2^r + b_{r-1}2^{r-1}+\dots+b_12 + b_0.\]
    Then, 
    \[ \binom{m}{n} \equiv \prod_{i=0}^r \binom{a_i}{b_i} \pmod{2}. \]
\end{theorem}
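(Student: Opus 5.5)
The statement to prove is Lucas's Theorem modulo $2$. I would prove it with the polynomial identity $(1+x)^{2} \equiv 1+x^{2} \pmod 2$ in $\mathbb{F}_2[x]$, which follows because the middle coefficient $2$ vanishes. By induction on $i$ (squaring repeatedly), $(1+x)^{2^i} \equiv 1+x^{2^i} \pmod 2$ for every $i\geq 0$.

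Next, write $m=\sum_{i=0}^r a_i 2^i$ with each $a_i\in\set{0,1}$. Then in $\mathbb{F}_2[x]$,
\[ (1+x)^m = \prod_{i=0}^r \brac{(1+x)^{2^i}}^{a_i} \equiv \prod_{i=0}^r \brac{1+x^{2^i}}^{a_i} = \prod_{i=0}^r \sum_{c_i=0}^{a_i}\binom{a_i}{c_i} x^{c_i 2^i} \pmod 2. \]
Here each factor has $a_i\in\set{0,1}$, so it is either $1$ or $1+x^{2^i}$. Expanding the product, each choice of $(c_0,\dots,c_r)$ with $c_i\leq a_i$ contributes the monomial $x^{\sum_i c_i 2^i}$ with coefficient $\prod_i\binom{a_i}{c_i}$.

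To read off the coefficient of $x^n$, the key point is uniqueness of binary representation. Since $c_i\in\set{0,1}$, the exponent $\sum_i c_i 2^i$ equals $n$ exactly when $c_i=b_i$ for all $i$. So at most one term contributes to $x^n$, namely the one with $c_i=b_i$. If $b_i\le a_i$ for all $i$, this term is present with coefficient $\prod_i \binom{a_i}{b_i}=1$. Otherwise no term contributes, and the convention $\binom{0}{1}=0$ makes the product $0$, which matches. Comparing with the coefficient $\binom{m}{n}$ of $x^n$ on the left side gives $\binom{m}{n}\equiv\prod_{i=0}^r\binom{a_i}{b_i}\pmod 2$.

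None of these steps is a real obstacle. The only point needing care is the bookkeeping for digits with $b_i>a_i$: such tuples simply do not appear in the expansion, and this is consistent with defining $\binom{0}{1}=0$.
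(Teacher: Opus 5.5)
The paper gives no proof of this statement: it is quoted in the appendix as the classical Lucas's Theorem (mod $2$) and used as a black box in Lemma~\ref{lemma:divisibility} and Lemma~\ref{lem:shapley-non-existence}. Your argument is the standard generating-function proof, and it is correct and complete. It uses the identity $(1+x)^{2^i}\equiv 1+x^{2^i}$ in $\mathbb{F}_2[x]$ and then reads off the coefficient of $x^n$ via uniqueness of binary expansions.

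As a small bonus, your proof never uses the hypotheses $m,n\geq 1$ or $n\leq m$. When $n>m$, some $b_i>a_i$, so both sides are $0$. This also covers the paper's application to $\binom{n-1}{k-1}$ with $k=1$.
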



\section{Example: An Unbiased but not Equitable Voting Rule}
\label{sec:size-9-unbiased}

We now provide an example of an unbiased (and hence Shapley-Shubik-fair and Banzhaf-fair) voting rule that is not equitable, on $n=9$ voters.

We define the rule in terms of its minimal winning coalitions (MWCs) of size $4$, given by:
\begin{align*}
    S_1&=\set{1,2,3,4}, & S_2&=\set{4,5,6,9}, & S_3&=\set{3,7,8,9}, \\
    S_4&=\set{1,4,5,7}, & S_5&=\set{2,5,6,8}, & S_6&=\set{3,6,7,9}, \\
    S_7&=\set{1,2,5,9}, & S_8&=\set{2,6,7,8}, & S_9&=\set{1,3,4,8}.
\end{align*}
Define $\mc W^{(4)} = \set{S_i: i=1,2,\dots, 9}$ to be the collection of all such sets.
Now, let $f$ be the function defined as follows:
\[ f(S)=\begin{cases}
    1, & W\subseteq S \text{ for some } W\in \mc W^{(4)} \\
    0, & W\subseteq N\setminus S \text{ for some } W\in \mc W^{(4)} \\
    \ind\bigl[{|S|\geq 5}\bigr], & \text{otherwise.}
\end{cases}.
\tag{($\dagger$)}
\label{eqn:middle-leval-plus-tie}\]
In words, $f$ outputs $1$ or $0$ if any of the sets in $\mc W^{(4)}$ vote for $1$ or $0$; otherwise, $f$ ``breaks the tie'' by taking a majority of the votes.\footnote{A similar approach to defining voting rules given a pairwise-intersecting family of sets is adopted in \citet{BartholdiHJTY21}.}\textsuperscript{,}\footnote{Another way to see the same function is the following: for sets of size at most 3, $f$ is 0; for sets of size 4, $f$ is 1 for exactly the sets in $\mc W^{(4)}$; for sets of size at least $5$, $f$ is uniquely determined by the neutrality property.}

\begin{proposition}
  The function $f$ is a well-defined voting rule which is unbiased, but is not equitable.
\end{proposition}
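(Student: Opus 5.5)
The plan is to verify the four required properties directly from the definition \eqref{eqn:middle-leval-plus-tie}: well-definedness, monotonicity and neutrality, unbiasedness, and failure of equitability. Unbiasedness will go through Corollary~\ref{corr:unbiased_winning_coalition_count}. Non-equitability will go through a combinatorial invariant of the family $\mc W^{(4)}$ that every symmetry of $f$ must preserve.

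\emph{Well-definedness and the level-$4$ structure.} First I check by inspection that the nine sets $S_1,\dots,S_9$ pairwise intersect. This gives two facts.
\begin{itemize}
\item No $S$ can contain some $W\in\mc W^{(4)}$ while $N\setminus S$ contains some $W'\in\mc W^{(4)}$, since then $W\cap W'=\emptyset$. So the first two cases of \eqref{eqn:middle-leval-plus-tie} never conflict.
\item For $|S|\le 3$ the first case cannot apply, so $f(S)=0$. For $|S|=4$, the first case applies only when $S=W$ for some $W\in\mc W^{(4)}$. Otherwise $f(S)=0$, either by the second case or by the majority tie-break.
\end{itemize}
Neutrality holds because the first and second cases swap under $S\mapsto N\setminus S$. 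The third case is also self-dual, since $n=9$ is odd and so $|S|\ge 5$ if and only if $|N\setminus S|\le 4$.

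For monotonicity, take $S\subseteq T$ with $f(S)=1$. If $S$ contains some $W$, then so does $T$. Otherwise $|S|\ge5$ and $S$ falls in the third case. Then $|T|\ge 5$, and $T$ cannot fall in the second case: its complement has size at most $4$, and if it contained some $W$ it would equal $W$, which would force $S\subseteq N\setminus W$ and hence $f(S)=0$.

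\emph{Unbiasedness.} By Corollary~\ref{corr:unbiased_winning_coalition_count}, it suffices to show that $w_i^{(k)}$ is independent of $i$ for every $k$. The counts by level are as follows.
\begin{itemize}
\item For $k\le 3$, $w_i^{(k)}=0$.
\item For $k\ge 6$, every set is winning by neutrality, so $w_i^{(k)}=\binom{8}{k-1}$.
\item For $k=4$, $w_i^{(4)}$ is the number of $S_j$ containing $i$. I will check from the list that every voter lies in exactly $4$ of the $S_j$; this is consistent with $9\cdot4/9=4$.
\item For $k=5$, a $5$-set $S\ni i$ is winning exactly when $N\setminus S$ is a $4$-set avoiding $i$ that is not in $\mc W^{(4)}$. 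So $w_i^{(5)}=\binom84-(9-4)=65$ for every $i$.
\end{itemize}

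\emph{Non-equitability.} This is the main step. Any symmetry $\sigma$ of $f$ maps winning $4$-sets to winning $4$-sets, so $\sigma$ permutes $\mc W^{(4)}$. It therefore preserves, for each voter $i$, the multiset $m(i)$ of co-occurrence numbers
\[
\bigl|\set{W\in\mc W^{(4)}: i,j\in W}\bigr|, \qquad j\neq i.
\]
For voter $1$ this multiset is $\{3,2,2,2,1,1,1,0\}$. The obstacle is to find a voter $i$ whose multiset differs, which would show that no symmetry maps $1$ to $i$. I would compute $m(i)$ for all nine voters.

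If all nine profiles coincide, I would fall back on a finer invariant. One option is the number of pairs $\{W,W'\}$ containing $i$ with $|W\cap W'|=2$. Another is the structure of the pairs covered three times, which forms a graph on $N$; I would compare the degree of $i$ in that graph across voters. Failing all such invariants, I would directly show that the automorphism group of the hypergraph $\mc W^{(4)}$ is not transitive. To do this I would fix the image of the triply covered pair $\{1,4\}$ and propagate the resulting constraints until they fail.
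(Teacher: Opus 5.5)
Your treatment of well-definedness, neutrality, monotonicity and unbiasedness is correct and matches the paper's. The paper also reduces to levels $k\le 4$ by neutrality; your explicit counts $w_i^{(5)}=\binom{8}{4}-5=65$ and $w_i^{(k)}=\binom{8}{k-1}$ for $k\ge 6$ are the same computation written out.

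For non-equitability you use a different vertex invariant of the hypergraph $\mc W^{(4)}$: the multiset $m(i)$ of pair co-occurrence numbers. The paper instead compares the multisets of pairwise intersection sizes among the four blocks through voter $1$ and through voter $2$, namely $\{1,2,2,2,2,3\}$ versus $\{1,1,1,2,2,3\}$. That is essentially your first fallback.

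What your write-up lacks is the decisive computation, which you left open. It succeeds immediately. Voter $2$ lies in $S_1,S_5,S_7,S_8$, so $m(2)=\{2,2,2,2,1,1,1,1\}\neq m(1)=\{3,2,2,2,1,1,1,0\}$. Put more simply, the blocks through $2$ cover all of $N$, while those through $1$ miss voter $6$.

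Your invariant in fact gives slightly more than the paper's comparison. The pair $\{1,4\}$ is the only pair covered three times, and $m(4)=\{3,2,2,1,1,1,1,1\}\neq m(1)$. So $m$ separates voter $1$ from every other voter, which means every symmetry of $f$ fixes voter $1$.

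You should state this computation explicitly and drop the contingency paragraph with its fallbacks.
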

The remainder of this section is devoted to the proof of the above proposition.
First, we prove some crucial properties about the set $\mc W^{(4)}$.

\begin{lemma}\label{lemma:pair_int}
    \begin{enumerate}
        \item The collection $\mc W^{(4)}$ is pairwise intersecting, i.e., for each $S,T\in \mc W^{(4)}$, it holds that $S\cap T\not=\emptyset$.
        \item Each voter $i\in N$ occurs in 4 sets in $\mc W^{(4)}$.
    \end{enumerate}
\end{lemma}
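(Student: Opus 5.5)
Both parts are finite checks on the nine explicit sets $S_1,\dots,S_9$, so I will verify them directly.

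For part~2, I will tally, for each voter $i\in\{1,\dots,9\}$, the sets $S_j$ that contain $i$. A quick sanity check comes first: the nine sets have total size $9\cdot 4=36=9\cdot 4$, so the average occurrence count is exactly $4$. It therefore suffices to confirm that no voter appears more than four times. Reading off the table, the occurrences are as follows:
\begin{itemize}
\item voter $1$ lies in $S_1,S_4,S_7,S_9$;
\item voter $2$ lies in $S_1,S_5,S_7,S_8$;
\item voter $3$ lies in $S_1,S_3,S_6,S_9$;
\item voter $4$ lies in $S_1,S_2,S_4,S_9$;
\item voter $5$ lies in $S_2,S_4,S_5,S_7$;
\item voter $6$ lies in $S_2,S_5,S_6,S_8$;
\item voter $7$ lies in $S_3,S_4,S_6,S_8$;
\item voter $8$ lies in $S_3,S_5,S_8,S_9$;
\item voter $9$ lies in $S_2,S_3,S_6,S_7$.
\end{itemize}
Each voter occurs exactly four times, which proves part~2.

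For part~1, I must check all $\binom{9}{2}=36$ pairs. To organize this, I will use the incidence lists just computed: $S_j\cap S_k\neq\emptyset$ exactly when some voter's list contains both $j$ and $k$. Each voter's list of four indices covers $\binom42=6$ pairs, so the nine voters cover $54$ pairs counted with multiplicity. I will record, voter by voter, which index pairs $\{j,k\}$ are covered, and then confirm that the union of these pairs is all $36$ pairs. Some pairs are covered more than once, which accounts for the surplus of $54-36=18$.

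Equivalently, I will go row by row. For each $j$, I will list a common element with every $S_k$, $k>j$. For example, $S_1$ meets $S_2,S_4,S_9$ in $4$, meets $S_3,S_6$ in $3$, and meets $S_5,S_7,S_8$ in $2$. I will do the same for $S_2$ through $S_8$.

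This is the main (though routine) obstacle: it is just bookkeeping, and the only risk is a missed pair. The incidence-list formulation makes it mechanical, and the row-by-row witnesses make it auditable. I expect no pair to fail, since the family is designed so that all nine sets pairwise intersect.
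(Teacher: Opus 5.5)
Your proposal is correct and is the same direct finite verification the paper relies on (its proof just says both items are easily verified), and your nine incidence lists are accurate. Running the pair-coverage check you describe on those lists does cover all $36$ pairs $\{j,k\}$, with multiplicities summing to $54$, so part~1 goes through as planned.
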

\begin{proof}
    Both the items are easily verified.
\end{proof}

Next, we show that the function defined by \autoref{eqn:middle-leval-plus-tie} is a valid voting rule:

\begin{lemma}
    The function $f$ is a well-defined voting rule.\footnote{More generally our proof shows the following: Suppose $n=2r+1$ is an odd number of voters, and suppose $\mc W^{(r)}$ is a pairwise-intersecting family of size-$r$ subsets of $\set{1,2,\dots,n}$.
    Then the rule as in \autoref{eqn:middle-leval-plus-tie} (where $\mc W^{(r)}$ are the MWCs of size $r$, and ties are broken by taking majority) is a well-defined voting rule.}
\end{lemma}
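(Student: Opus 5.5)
We must show three things about $f$ as defined in \eqref{eqn:middle-leval-plus-tie}: the case definition is consistent, $f$ is monotone, and $f$ is neutral. We will work in the general setting of the footnote, with $n=2r+1$ and $\mc W^{(r)}$ a pairwise-intersecting family of $r$-sized sets. Here $r=4$, and pairwise intersection holds by Lemma~\ref{lemma:pair_int}.

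\textbf{Well-definedness.} The first two cases cannot both apply. If $W\subseteq S$ and $W'\subseteq N\setminus S$ with $W,W'\in\mc W^{(r)}$, then $W\cap W'=\emptyset$, which contradicts pairwise intersection. So each $S$ receives exactly one value.

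\textbf{A cleaner description.} Next we establish the description in the second footnote. If $|S|\le r-1$, no $W$ fits inside $S$, so $f(S)$ is $0$ either by the second case or by the tie-break. If $|S|\ge r+2$, then $|N\setminus S|\le r-1$, so the second case cannot occur and $f(S)=1$. If $|S|=r$, then $f(S)=1$ exactly when $S\in\mc W^{(r)}$; otherwise $f(S)=0$ by the second case or the tie-break. If $|S|=r+1$, then $f(S)=0$ exactly when $N\setminus S\in\mc W^{(r)}$, and $f(S)=1$ otherwise. So $f$ is $0$ below level $r$, $1$ above level $r+1$, equals the indicator of $\mc W^{(r)}$ on level $r$, and equals the complement-indicator on level $r+1$.

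\textbf{Neutrality.} The defining cases are symmetric under $S\mapsto N\setminus S$: the first case swaps with the second, and the third is preserved. Since $|S|\ge r+1$ if and only if $|N\setminus S|\le r$, the tie-break becomes $1-\ind[|S|\ge r+1]$. Hence $f(N\setminus S)=1-f(S)$.

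\textbf{Monotonicity.} This is the main point to check. It suffices to consider $T=S\cup\{j\}$. By the level description, the only nontrivial step is from level $r$ to level $r+1$. If $f(S)=1$ then $S\in\mc W^{(r)}$, and we need $f(T)=1$. If instead $f(T)=0$, then $N\setminus T\in\mc W^{(r)}$, and it is disjoint from $S\in\mc W^{(r)}$, which contradicts pairwise intersection. All other level transitions go from $0$ to anything, or from anything to $1$ once $|T|\ge r+2$, so they are automatic.
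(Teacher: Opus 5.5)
Your proof is correct. Well-definedness and neutrality follow the paper: you spell out the case swap that the paper calls ``easily verified.''

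Monotonicity is where your route differs.

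\emph{The paper's argument.} It works directly from the three-case definition for arbitrary $S\subseteq T$ and splits on why $f(S)=1$. Either some $W\subseteq S\subseteq T$. Or $|S|\ge 5$ and no $W$ lies in $N\setminus S\supseteq N\setminus T$, so $T$ falls into the first case or into the tie-break with $|T|\ge 5$.

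\emph{Your argument.} You first derive the level-by-level description of $f$, which the paper only asserts in a footnote. You then reduce to single-element additions. The only nontrivial step is from level $r$ to level $r+1$, where pairwise intersection of $S$ and $N\setminus T$ gives the contradiction.

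\emph{What each buys.} The paper's route is shorter and needs no chain-of-additions reduction. Yours also proves the footnote's alternative characterization. It also makes explicit that the pairwise-intersecting hypothesis is exactly what keeps the middle two levels consistent. In the paper that use is hidden inside well-definedness, which is needed to conclude $f(T)=1$ from $W\subseteq T$.
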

\begin{proof}
    The proof is carried out in three steps:

    \paragraph{Well-defined:} Note that \autoref{eqn:middle-leval-plus-tie} defines a valid function if and only if the first two conditions never occur simultaneously.
    That is, we wish to show that there is no $S\subseteq N$ and $ W,W'\in \mc W^{(4)}$ satisfying $W\subseteq S$ and $W'\subseteq N\setminus S$.
    This follows since $\mc W^{(4)}$ is pairwise intersecting (Lemma~\ref{lemma:pair_int}).

    \paragraph{Monotone:} Let $S\subseteq T \subseteq N$ be arbitrary sets.
    If $f(S)=0$, then it holds that $f(S)\leq f(T)$ and there is nothing to prove.
    Now, suppose that $f(S)=1$.
    By \autoref{eqn:middle-leval-plus-tie}, there are two possibilities:
    \begin{itemize}
        \item There exists $W\in \mc W^{(4)}$ with $W\subseteq S$. Then, it also holds $W\subseteq T$, and hence $f(T)=1$.
        \item $|S|\geq 5$, and for every $W\in \mc W^{(4)}$, $W\not\subseteq (N\setminus S)$.
        Then, we have $|T|\geq |S|\geq 5$, and also for every $W\in \mc W^{(4)}$, $W\not\subseteq (N\setminus T)\subseteq  (N\setminus S)$. Hence, $f(T)=1$.
    \end{itemize}

    \paragraph{Neutral:} For all $S\subseteq N$, it is easily verified for each case in \autoref{eqn:middle-leval-plus-tie} that $f(N\setminus S) = 1-f(S)$.
\end{proof}

Next, we show that $f$ is unbiased.

\begin{proposition}
    The voting rule $f$ is unbiased.\footnote{More generally our proof shows the following: Suppose $n=2r+1$ is an odd number of voters. Let $\mc W^{(r)}$ be a pairwise-intersecting family of size-$r$ subsets of $\set{1,2,\dots,n}$, and 
    suppose that all voters $i\in N$ are contained in the same number of sets from $\mc W^{(r)}$.
    Then, the voting rule constructed as in \autoref{eqn:middle-leval-plus-tie} (where $\mc W^{(r)}$ are the MWCs of size $r$, and ties are broken by taking majority), is unbiased.}
\end{proposition}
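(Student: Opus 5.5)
By Corollary~\ref{corr:unbiased_winning_coalition_count}, it suffices to show that for every $k\in\set{0,1,\dots,9}$ the count $w_i^{(k)}$ of winning coalitions of size $k$ containing $i$ does not depend on $i$. The plan is to first determine exactly which sets of each size are winning, and then count level by level.

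\textbf{Step 1: describe the winning sets by size.}
\begin{itemize}
\item If $|S|\le 3$, then $S$ contains no member of $\mc W^{(4)}$, so $f(S)$ is $0$ (either because some $W\subseteq N\setminus S$, or by the majority fallback).
\item If $|S|=4$, then $f(S)=1$ exactly when $S\in\mc W^{(4)}$, since $S$ cannot contain a different $4$-set.
\item If $|S|\ge 5$, neutrality gives $f(S)=1-f(N\setminus S)$. Hence every set of size at least $6$ is winning.
\item If $|S|=5$, then $S$ is winning exactly when $N\setminus S\notin\mc W^{(4)}$.
\end{itemize}

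\textbf{Step 2: count for each $k$.}
\begin{itemize}
\item For $k\le 3$ we have $w_i^{(k)}=0$.
\item For $k\ge 6$ we have $w_i^{(k)}=\binom{8}{k-1}$.
\item For $k=4$, Lemma~\ref{lemma:pair_int} gives $w_i^{(4)}=4$ for every $i$.
\item For $k=5$, count the $5$-sets containing $i$ and subtract those whose complement lies in $\mc W^{(4)}$:
\[ w_i^{(5)}=\binom{8}{4}-\abs{\set{W\in\mc W^{(4)}: i\notin W}}=70-(9-4)=65, \]
again by Lemma~\ref{lemma:pair_int}.
\end{itemize}
All counts are independent of $i$, so $f$ is unbiased. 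The only slightly delicate step is the $k=5$ case, which depends on the regularity of $\mc W^{(4)}$; the footnoted generalization follows by the same argument with $4$ replaced by $r$.

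\textbf{Step 3: show $f$ is not equitable.} Any symmetry of $f$ must preserve the minimal winning coalitions of size $4$, that is, it must lie in $\Aut(\mc W^{(4)})$. So it suffices to exhibit a combinatorial invariant of points, preserved by automorphisms of this hypergraph, that is not constant across the nine voters. Candidates include:
\begin{itemize}
\item the multiset of pairwise co-occurrence counts $\abs{\set{W: i,j\in W}}$ over $j\neq i$;
\item the number of sets through $i$ that meet in exactly one point.
\end{itemize}
I would compute the co-occurrence profile for each voter first. For instance, voter $1$ lies in $S_1,S_4,S_7,S_9$, and from these one reads off its co-occurrence numbers with $2,\dots,9$. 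If the profiles of two voters differ, no automorphism maps one to the other, and transitivity fails.

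This is the main obstacle, since it requires finding an invariant that actually separates voters. If the co-occurrence profiles happen to coincide, I would fall back on finer invariants, such as triangle counts in the co-occurrence graph.
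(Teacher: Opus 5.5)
Your proof is correct and follows the same route as the paper. You reduce via Corollary~\ref{corr:unbiased_winning_coalition_count} to level-by-level counts, note that no set of size at most $3$ wins and that the size-$4$ winners are exactly the regular family $\mc W^{(4)}$, and handle $k\ge 5$ by neutrality; the paper does that last step in one stroke via $w_i^{(k)} = w_i^{(n-k)} + \binom{n-1}{n-k} - w^{(n-k)}$ instead of your explicit values $w_i^{(5)}=65$ and $w_i^{(k)}=\binom{8}{k-1}$.

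Your Step~3 is not needed here, since non-equitability is a separate lemma in the paper. For the record, your co-occurrence invariant does separate voters: voter $1$ never appears with voter $6$ and has profile $\set{0,1,1,1,2,2,2,3}$, while voter $2$ has $\set{1,1,1,1,2,2,2,2}$.
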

\begin{proof}
    For each $k=0,1,2,\dots,n$, and $i\in N$, let $w_i^{(k)}$ denote the number of winning coalitions of size $k$, i.e., \[w_i^{(k)} = \abs{\set{S\subseteq N :  f(S)=1,\ S\ni i,\ \abs{S}=k}}.\]
    By Proposition~\ref{prop:unbiased_win_coal}, it suffices to prove that for each $i,j\in N$, and each $k\in \set{0,1,2,\dots,n}$, it holds that $w_i^{(k)} = w_j^{(k)}$.
    By neutrality, it suffices to show that this holds for all $0\leq k\leq 4$, because for any $4 < k \leq 9$,  we have
    \[w_i^{(k)} =  w_i^{(n-k)} + \binom{n-1}{n-k} - w^{(n-k)}, \]
    where $w^{(n-k)}$ denotes the total number of winning coalitions of size $n-k$.
    
    First, observe that by \autoref{eqn:middle-leval-plus-tie}, we have $f(S)=0$ for each $S\subseteq N$ with $|S|\leq 3$.
    Hence, for all $k\leq 3$, and $i,j\in N$, we have $w_i^{(k)}=w_j^{(k)}$.

    Now, suppose $k=4$.
    Then, by \autoref{eqn:middle-leval-plus-tie}, we observe that the set of winning coalitions of size $4$ exactly equals $\mc W^{(4)}$.
    Since each voter is contained in the same number of sets in $\mc W^{(4)}$, we have $w_i^{(4)}=w_j^{(4)}$ for all $i,j\in N$.
\end{proof}

Finally, we show that $f$ is not equitable.

\begin{lemma}
    The voting rule $f$ is not equitable.
\end{lemma}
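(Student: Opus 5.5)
The plan is to show that no symmetry of $f$ can map voter $1$ to voter $6$. This already rules out transitivity of the symmetry group.

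\textbf{Step 1: reduce to the family $\mc W^{(4)}$.} Any symmetry $\sigma$ of $f$ preserves coalition sizes and preserves winning coalitions. Hence it maps the winning coalitions of size $4$ bijectively onto themselves. As shown in the proof of unbiasedness, these are exactly the sets of $\mc W^{(4)}$. So every symmetry of $f$ is an automorphism of the hypergraph $\mc W^{(4)}$, and it suffices to find an invariant of this hypergraph that separates two voters.

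\textbf{Step 2: choose an invariant.} For voters $i\neq j$, let $c(i,j)$ be the number of sets in $\mc W^{(4)}$ containing both $i$ and $j$. An automorphism $\sigma$ satisfies $c(\sigma(i),\sigma(j))=c(i,j)$. Therefore the multiset $\{c(i,j): j\neq i\}$ is preserved when $i$ is replaced by $\sigma(i)$. In particular, the number of voters $j$ with $c(i,j)=3$ is invariant.

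\textbf{Step 3: compute the invariant for voters $1$ and $6$.} Voter $1$ lies in $S_1,S_4,S_7,S_9$. From these sets, $c(1,4)=3$, $c(1,2)=c(1,3)=c(1,5)=2$, $c(1,7)=c(1,8)=c(1,9)=1$, and $c(1,6)=0$. Voter $6$ lies in $S_2,S_5,S_6,S_8$. From these sets, $c(6,j)=2$ for $j\in\{2,5,7,8,9\}$, $c(6,3)=c(6,4)=1$, and $c(6,1)=0$. Thus voter $1$ has a partner appearing with it in three sets, namely voter $4$, while voter $6$ has none. Consequently no symmetry sends $1$ to $6$, and $f$ is not equitable.

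The only step requiring care is the bookkeeping in Step 3, which is a routine finite check. The conceptual point is Step 1, namely that symmetries must preserve the size-$4$ winning coalitions.
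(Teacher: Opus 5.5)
Your proof is correct (I rechecked $c(1,\cdot)$ and $c(6,\cdot)$), and it follows the paper's strategy: every symmetry of $f$ is an automorphism of $\mc W^{(4)}$, the family of size-$4$ winning coalitions, and a local counting invariant then separates two voters. The only difference is the invariant. The paper rules out $\sigma(1)=2$ by comparing the multisets of pairwise intersection sizes $|S\cap T|$ among the four sets of $\mc W^{(4)}$ through each voter ($\{1,2,2,2,2,3\}$ versus $\{1,1,1,2,2,3\}$), whereas you use the co-degree profile $c(i,\cdot)$ to rule out $\sigma(1)=6$.
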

\begin{proof}
    We show that that there is no symmetry $\sigma:N\to N$ of $f$ such that $\sigma(1)=2$.
    Suppose, for the sake of contradiction, that $\sigma$ is such a symmetry.
    Observe that $\sigma$ is also a symmetry of $\mc W^{(4)}$, i.e., for each $S\in \mc W^{(4)}$, it holds that $\sigma(S)\in \mc W^{(4)}$.
    This is because by \autoref{eqn:middle-leval-plus-tie}, the collection $\mc W^{(4)}$ is exactly the set of all size-4 winning coalitions of $f$.

    Let $\mc W_1^{(4)}$ (resp. $\mc W_2^{(4)}$) be the collection the sets in $\mc W^{(4)}$ including voter 1 (resp. voter 2), i.e.,
    \[ \mc W_1^{(4)} = \set{ \set{1,2,3,4}, \set{1,4,5,7}, \set{1,2,5,9},\set{1,3,4,8} }, \]
    \[ \mc W_2^{(4)} = \set{\set{1,2,3,4}, \set{2,5,6,8}, \set{1,2,5,9}, \set{2,6,7,8}}. \]
    Since $\sigma(1)=2$, and $\sigma$ is a symmetry of $\mc W^{(4)}$, it must hold that $\sigma$ maps $ \mc W_1^{(4)}$ to $\mc W_2^{(4)}$.
    Hence, the multi-set of pairwise intersection sizes must be the same for $\mc W_1^{(4)}$ and $\mc W_2^{(4)}$, i.e.,
    \[ \set{|S\cap T|: S\not=T\in \mc W_1^{(4)}} = \set{|S\cap T|: S\not=T\in \mc W_2^{(4)}}. \]
    However, direct calculation shows that the multi-set on the left equals $\set{1,2,2,2,2,3}$, while the multi-set on the right equals $\set{1,1,1,2,2,3}$, which is a contradiction.
\end{proof}


\section{Example: An Equitable Voting Rule}\label{app:equit_ex}

We now present an additional example of an equitable voting rule, which may be of independent interest.
We define the rule in terms of its minimal winning coalitions (MWCs) of size $4$.
These MWCs are constructed from the triangular prism graph in \autoref{fig:size-9-unbiased} as follows.\footnote{
  We found this voting rule when trying to construct a non-equitable (but unbiased) rule in terms of the non-transitive-symmetric (but degree-regular) line graph of \autoref{fig:size-9-unbiased}. To our surprise, however, the corresponding voting rule turned out to be equitable.
}

\begin{figure}
    \centering
    \includegraphics[width=0.5\linewidth]{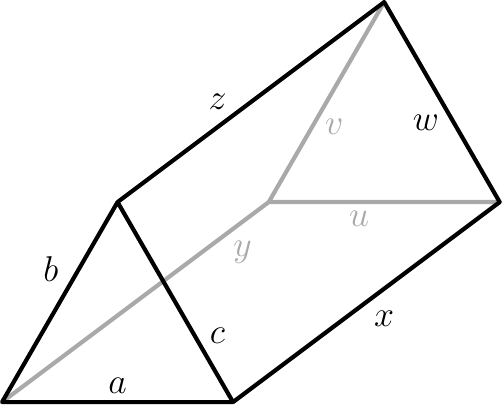}
    \caption{A triangular prism}
    \Description{A Triangular Prism.}
    \label{fig:size-9-unbiased}
\end{figure}

Let $E = \set{a,b,c,x,y,z,u,v,w}$ denote the edges of the graph, which we identify with the set of 9 voters.
For each edge $e\in E$, let $N(e)$ denote all the edges adjacent to edge $e$.
These are:
\begin{align*}
    N(a)&=\set{b,c,x,y}, & N(b)&=\set{a,c,y,z}, & N(c)&=\set{a,b,x,z}, \\
    N(x)&=\set{a,c,u,w}, & N(y)&=\set{a,b,u,v}, & N(z)&=\set{b,c,v,w}, \\
    N(u)&=\set{x,y,v,w}, & N(v)&=\set{y,z,u,w}, & N(w)&=\set{x,z,u,v}. 
\end{align*}

Define $\mc W^{(4)} = \set{N(e): e\in E}$ to be the collection of all such sets.
This is easily verified to be pairwise-intersecting.
Now, let $f$ be the voting rule defined as in \autoref{eqn:middle-leval-plus-tie}.

\begin{lemma}
    The voting rule $f$ is equitable.
\end{lemma}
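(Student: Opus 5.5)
The plan is to exhibit an explicit transitive group of symmetries of $f$. First I reduce the problem to the family $\mc W^{(4)}$. By the last footnote attached to \autoref{eqn:middle-leval-plus-tie}, $f(S)$ depends only on whether $S$ contains a member of $\mc W^{(4)}$ and on $|S|$ (with $f(S)=0$ for $|S|\le 3$, $f(S)=\ind[S\in\mc W^{(4)}]$ for $|S|=4$, and neutrality fixing the rest). Cardinalities are preserved by every permutation of $E$. Hence any permutation $\sigma$ of $E$ with $\sigma(\mc W^{(4)})=\mc W^{(4)}$ is a symmetry of $f$. So it suffices to find a set of such permutations that together generate a group acting transitively on the nine voters.

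\textbf{Step 1: automorphisms of the prism.} Every automorphism of the triangular prism permutes its edges and preserves adjacency of edges, so it maps $N(e)$ to $N(\pi(e))$ and thus preserves $\mc W^{(4)}$. The prism's automorphism group has order $12$: it consists of $S_3$ acting simultaneously on the two triangles, together with the swap of the two triangles. This group has two orbits on edges:
\begin{itemize}
\item the six triangle edges $\{a,b,c,u,v,w\}$;
\item the three rungs $\{x,y,z\}$.
\end{itemize}
I will write these down concretely, for instance the rotation $(a\,b\,c)(x\,y\,z)(u\,v\,w)$, up to relabelling so that it matches the neighbourhood table, and the triangle swap exchanging $\{a,b,c\}$ with $\{u,v,w\}$ and fixing the rungs appropriately. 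I will check against the listed sets $N(e)$ that each of them permutes $\mc W^{(4)}$.

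\textbf{Step 2 (main obstacle): a symmetry mixing the two orbits.} I need one permutation $\sigma$ of $E$ that preserves $\mc W^{(4)}$ and sends a triangle edge, say $a$, to a rung, say $x$. Such a $\sigma$ cannot be a graph automorphism: in the line graph, $a$ lies in three triangles while $x$ lies in only two. The point is that $\sigma$ need only satisfy $\sigma(N(e))=N(\tau(e))$ for some bijection $\tau$ of $E$, possibly different from $\sigma$. To find it, I will first compute invariants of each voter:
\begin{itemize}
\item the multiset of pairwise intersection sizes among the four members of $\mc W^{(4)}$ containing that voter, as in the non-equitability argument for the previous example;
\item the incidence pattern of that voter with the other voters.
\end{itemize}
I expect these invariants to agree for all nine voters; the footnote says the rule turned out to be equitable. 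Next I will match the four sets containing $a$, namely $N(b),N(c),N(x),N(y)$, to the four sets containing $x$, namely $N(a),N(c),N(u),N(w)$, consistently with their intersection pattern. Then I extend the map element by element, using that each voter is determined by which of the nine sets contain it. I try first a $\sigma$ that swaps $a\leftrightarrow x$ and adjusts the remaining edges to fit the forced incidences. I then verify directly that all nine sets $N(e)$ map into $\mc W^{(4)}$.

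\textbf{Step 3: conclude.} The group generated by the prism automorphisms and $\sigma$ consists of symmetries of $f$. It contains the orbit $\{a,b,c,u,v,w\}$, the orbit $\{x,y,z\}$, and an element joining the two, so it acts transitively on $E$. Hence $f$ is equitable. Since $f$ is already known to be a well-defined voting rule by the general footnote on \autoref{eqn:middle-leval-plus-tie} (the family $\mc W^{(4)}$ is pairwise intersecting), nothing else is needed.
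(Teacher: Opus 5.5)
Your overall structure is the paper's. You reduce to permutations of $E$ preserving $\mathcal W^{(4)}$, and your reduction via the ``size $\le 3$ / size $4$ / neutrality'' description of $f$ is correct. You use prism automorphisms for transitivity within $\{a,b,c,u,v,w\}$ and within $\{x,y,z\}$; your rotation $(a\,b\,c)(x\,y\,z)(u\,v\,w)$ and the swap $a\leftrightarrow u$, $b\leftrightarrow v$, $c\leftrightarrow w$ fixing the rungs do preserve the table. You then add one extra symmetry carrying a triangle edge to a rung.

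The gap is that this extra symmetry is never produced, and it is the only nontrivial content of the lemma. Step 2 is a search procedure. The only reason you give that it succeeds is ``I expect these invariants to agree \ldots\ the footnote says the rule turned out to be equitable,'' which assumes what is to be proved. Agreement of invariants is only a necessary condition: for example, the multiset of pairwise intersection sizes is indeed $\{1,1,2,2,2,2\}$ for both $a$ and $x$, but that alone does not give a permutation preserving all nine sets. Until a concrete $\sigma$ is written down and checked, transitivity across the two orbits is unproved.

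The paper fills this step with an explicit witness: $\sigma=(a\,x)(b\,z)(c\,y)(u\,v)$, fixing $w$. Substituting into the table gives:
\begin{itemize}
\item $\sigma(N(a))=N(b)$ and $\sigma(N(b))=N(a)$;
\item $\sigma(N(c))=N(c)$;
\item $\sigma(N(x))=N(u)$ and $\sigma(N(u))=N(x)$;
\item $\sigma(N(y))=N(w)$ and $\sigma(N(w))=N(y)$;
\item $\sigma(N(z))=N(v)$ and $\sigma(N(v))=N(z)$.
\end{itemize}
So $\sigma$ preserves $\mathcal W^{(4)}$ and sends $a$ to $x$. As you anticipated, the induced permutation of the sets, $N(e)\mapsto N(\tau(e))$ with $\tau=(a\,b)(x\,u)(y\,w)(z\,v)$, differs from $\sigma$.

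Your matching procedure, sending $N(b),N(c),N(x),N(y)$ to $N(a),N(c),N(u),N(w)$ respectively, leads to exactly this map, but it has to be carried out and verified. Once $\sigma$ is inserted, your Step 3 finishes the proof just as the paper's sketch does.
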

\begin{proof}[Proof Sketch]
    The permutation $\sigma=(a,x)(b,z)(c,y)(u,v)(w)$, represented in cycle notation, is a symmetry of $\mc W^{(4)}$ mapping $a$ to $x$.
    Note that defines a permutation on $\mc W^{(4)}$, given by \[(N(a), N(b))\ (N(c))\ (N(x), N(u))\ (N(y), N(w))\ (N(z), N(v)).\]

    For any other pair of edges $e,e' \in E$, we can compose the above symmetry with rotations and reflections (on the prism graph) to get a symmetry mapping $e$ to $e'$.
\end{proof}


\end{document}